\newtheorem{example}{Example}
\newtheorem{lemma}{Lemma}
\newtheorem{proposition}{Proposition}
\newtheorem{corollary}{Corollary}
\newtheorem{claim}{Claim}
\newtheorem{remark}{Remark}
\Crefname{remark}{Remark}{Remarks}
\Crefname{rmk}{Remark}{Remarks}
\Crefname{dfn}{Definition}{Definitions}
\Crefname{thm}{Theorem}{Theorems}
\Crefname{cor}{Corollary}{Corollaries}
\Crefname{lem}{Lemma}{Lemmas}
\Crefname{examplex}{Example}{Examples}
\Crefname{prop}{Proposition}{Propositions}
\newcommand{\MMS}{\textrm{\textup{MMS}}}
\newcommand{\NPC}{\textrm{\textup{NP-complete}}}
\newcommand{\NPH}{\textrm{\textup{NP-hard}}}
\newcommand{\EFX}{\textrm{\textup{EFX}}}
\newcommand{\EF}[1]{\ifstrempty{#1}{\textrm{\textup{EF}}}{\textrm{\textup{EF{$#1$}}}}}
\newcommand{\RM}{\textrm{\textup{RM}}}
\newcommand{\PO}{\textup{PO}}
\newcommand{\pos}{\mathrm{pos}}
\title{Fairly Allocating Goods and (Terrible) Chores}
\author{Hadi Hosseini}
\author{Aghaheybat Mammadov}
\author{Tomasz Wąs}
\affil{Pennsylvania State University\vspace{0.15cm}\\
 {\normalsize \{hadi, mammadovagha, twas\}@psu.edu}}
\date{}
\begin{document}
\maketitle
\begin{abstract}
\noindent
We study the fair allocation of mixtures of indivisible goods and chores under lexicographic preferences---a subdomain of additive preferences. 
A prominent fairness notion for allocating indivisible items is \textit{envy-freeness up to any item} (\EFX{}). Yet, its existence and computation has remained a notable open problem.
By identifying a class of instances with ``terrible chores'', we  show that determining the existence of an \EFX{} allocation is \NPC{}. This result immediately implies the intractability of \EFX{} under additive preferences.
Nonetheless, we propose a natural subclass of lexicographic preferences for which an \EFX{} and Pareto optimal (\PO{}) allocation is guaranteed to exist and can be computed efficiently for any mixed instance.
Focusing on two weaker fairness notions, we investigate finding \EF{1} and \PO{} allocations for special instances with terrible chores, and show that \MMS{} and \PO{} allocations can be computed efficiently for any mixed instance with lexicographic preferences. 
\end{abstract}

\section{Introduction}

Fair division of indivisible items has provided a rich mathematical framework for studying computational and axiomatic aspects of fairness in a variety of settings ranging from assigning students to courses \citep{budish2011combinatorial} and distributing food donations \citep{aleksandrov2015online} to assigning papers to reviewers \citep{shah2022challenges,payan2022will} and distributing medical equipment and vaccines \citep{schmidt2021equitable,aziz2021efficient,pathak2021fair}.
In these applications, the preferences of agents over items may be \textit{subjective}, that is, some agents may consider an item as a \textit{good} (with non-negative utility) while others may see the same item as a \textit{chore} (with negative utility). 
For instance, in peer reviewing, reviewers may consider a paper to be a chore if it is outside of their immediate expertise while another subset of reviewers consider it as a good due its proximity to their own field. 
Thus, an emerging line of work has focused on fair allocation of mixture of goods and chores  \citep{ACI+19fair,BBB+20envy,kulkarni2021indivisible}.

When distributing indivisible items, a prominent fairness notion, \textit{envy-freeness} (\EF{}) \citep{foley1967resource,gamow1958puzzle}, may not always exist.
Its most compelling relaxation,
\textit{envy-freeness up to any item} (\EFX{}) \citep{CKM+19unreasonable},
states that any pairwise envy
is eliminated if we remove \emph{any} single item that is
considered a good in the envied agent's bundle
or is seen as a chore in the envious agent's bundle.
A slightly weaker notion is \textit{envy-freeness up to one item} (\EF{1}) \citep{LMM+04approximately,budish2011combinatorial}, which requires that any pairwise envy can be eliminated by the removal of \emph{some} single item from the bundle of one of the two agents.
%
These relaxations gave rise to several challenging open problems, particularly when dealing with chores: the existence of \EFX{} and the existence and computation of \EF{1} in conjunction with  efficiency notions such as \textit{Pareto optimality} (\PO{}).


To gain insights into structural and computational boundaries of achieving these fairness notions, several recent efforts have considered a variety of \textit{restricted domains} such as limiting the number of agents \citep{CGM20efx,mahara2021extension}, the item types \citep{aziz2022fairtypes,nguyen2023fair}, or the valuations (binary, bi-valued valuations, or identical) \citep{HPP+20fair,garg2022fair,BBB+20envy}.
One such natural restriction are \textit{lexicographic preferences}---a subdomain of additive preferences---which provides a compact representation of preferences, and has been studied in voting \citep{LMX18voting}, object allocation \citep{saban2014note,HL19multiple}, and fair division \citep{nguyen2020fairly,ebadian2022fairly}.


In this domain, it was recently shown that an \EFX{} allocation may not always exist for mixed instances \citep{hosseini2022fairly}. 
Moreover, while weaker fairness notions such as \EF{1} and maximin share (\MMS{}) are guaranteed to exist for mixed items, their computation along with \PO{} remains unknown even for \textit{objective} instances where all agents agree on whether an item is a good or a chore.

The non-existence of \EFX{} for mixed items crucially relies on a set of highly undesirable chores (aka `terrible chores'). Without these chores (i.e., if a single agent considers a good as its most important item), under lexicographic preferences an \EFX{} and \PO{} allocation can be computed in polynomial time \citep{hosseini2022fairly}.
This observation raises several important questions: Can we efficiently decide whether an \EFX{} allocation exists even in the presence of terrible chores?
Can we efficiently compute an \EF{1} (or \MMS{}) allocation in conjunction with Pareto optimality?

\subsection{Contributions}

We focus on the allocation of mixtures of goods and chores in the lexicographic domain and resolve several open computational problems pertaining to the well-studied fairness notions of \EFX{}, \EF{1}, and \MMS{}.

\paragraph[EFX.]{\EFX{}.}
We show that determining the existence of an \EFX{} allocation is \NPC{} under lexicographic mixed instances even for \textit{objective} preferences, i.e., when all agents agree on whether an item is a good or a chore   (\cref{thrm:efx:hardness}). To the best of our knowledge, this finding is the first computational intractability result for \EFX{} over any preference extension containing lexicographic (and as a result additive) preferences.
Subsequently, we discuss that deciding whether an \EFX{} and PO allocation exist is \NPH{} (\cref{cor:efx+po:hardness}).
%

\paragraph[EFX+PO.]{\EFX{}+\PO{}.}
Given the non-existence of \EFX{} even for objective mixed instances, and the computational hardness of determining such allocations, we identify a natural variation of lexicographic preferences, called \textit{separable} lexicographic preferences for which positive results can be obtained. In particular, we show that \EFX{} and \PO{} allocations are guaranteed to exist even on instances that contain terrible chores (\cref{thrm:efx+po+sep}), and thus, prove that under separable lexicographic preferences, an \EFX{} and \PO{} allocation can be computed efficiently (\cref{cor:efx+po}). 

\paragraph[EF1+PO.]{EF$\mathbf{1}$+\PO{}.}
Given the non-existence of \EFX{} under general (not necessarily separable) lexicographic mixed instances, we focus our attention on \EF{1} along with Pareto optimality. While an \EF{1} allocation always exists and can be computed efficiently \citep{BSV21approximate,ACI+19fair}, its existence and computation along with \PO{} remains open even for additive chores-only instances.
We identify a class of lexicographic mixed instances with sufficiently many common terrible chores for which an \EF{1} and \PO{} allocation can be computed in polynomial time (\cref{thm:ef1POterrible}), and discuss several technical challenges in extending these results. 


\paragraph[MMS+PO.]{\MMS{}+\PO{}.}
Despite the non-existence of \EFX{} and challenges in achieving \EF{1}+\PO{}, we show that an \MMS{} and \PO{} allocation always exist for any mixed instance containing terrible chores (\cref{thrm:mms+po}), and can be computed efficiently for any mixed instance (\cref{cor:mms+po}).
Moreover, we show that when the efficiency is strengthened to rank-maximality (RM), deciding whether an instance admits an \MMS{} and rank-maximal allocation is \NPC{} (\cref{thrm:mms+rm}).


\subsection{Related Work}

The existence of \EFX{} is a major open problem in goods-only and chores-only settings. Moreover, \EFX{} is known to be incompatible with \PO{} under non-negative valuations \citep{PR20almost}.
An \EFX{} allocation may fail to exist under non-monotone, non-additive, and identical valuation functions \citep{BBB+20envy} and for mixed items with additive valuations \citep{hosseini2022fairly}.
Yet, determining whether an instance admits an \EFX{} allocation has been an open question, which we answer in this paper.

An \EF{1} allocation can be computed efficiently in goods-only \citep{CKM+19unreasonable,LMM+04approximately} and chores-only \citep{ACI+19fair,BSV21approximate} settings. When considering economic efficiency, for goods-only problems \EF{1} is compatible with \PO{}~\citep{CKM+19unreasonable} and can be computed in pseudo-polynomial time~\citep{BKV18finding}. 
In contrast, for chores-only settings, it is not known whether \EF{1} and \PO{} allocations exist under additive valuations.
For mixed items, an \EF{1} allocation can still be computed efficiently when valuations are doubly monotonic (which includes additive valuations) \citep{BSV21approximate,ACI+19fair} through a careful use of the envy-graph algorithm.
However, achieving \EF{1} alongside \PO{} (except for two agents \citep{ACI+19fair}) remains an open problem.

With additive valuations, an \MMS{} allocation could fail to exist in both the goods-only~\citep{KPW18fair} and the chores-only~\citep{aziz2017algorithms} settings. Due to this non-existence, several multiplicative \citep{aziz2017algorithms,ghodsi2018fair,garg2020improved}  and ordinal approximations \citep{babaioff2019fair,hosseini2022ordinal} to \MMS{} have been proposed for both goods-only and chores-only settings. For mixed items, no constant multiplicative \citep{kulkarni2021indivisible} or ordinal 
 \citep{hosseini2022ordinalchores} approximation of \MMS{} may exist.

\paragraph{Domain Restriction.}
To circumvent the negative results and explore the computational boundary and their compatibility with other properties, much attention has been given to studying fairness in restricted domains.
For goods-only settings, an \EFX{} allocation is guaranteed to exist when agents have identical monotone valuations~\citep{PR20almost}, or submodular valuations with binary marginals~\citep{BEF21fair,VZ22yankee}, or additive valuations with at most two distinct values~\citep{ABF+21maximum,GM21computing}.
For for chores-only instances, an \EFX{} allocation can be efficiently computed  when there are four agents with only two types of additive valuations over seven items \citep{BT2022EFX}.
Under lexicographic preferences, \EFX{} and \PO{} allocation always exist and can be computed in polynomial time for goods-only and chores-only settings~\citep{hosseini2021fair}, and can often be satisfied along with strategyproofness and other desirable properties.

In chores-only settings, \EF{1} and \PO{} allocations can be computed in polynomial time when preferences are restricted to bivalued additive valuations \citep{ebadian2022fairly,garg2022fair} or when there are only two types of chores \citep{aziz2022fairtypes}.
Similarly, \MMS{} allocations are known to always exist for restricted domains such as personalized bivalued valuations, and can be computed efficiently along \PO{} under factored bivalued valuations and weakly lexicographic valuations (allowing ties between items) \citep{ebadian2022fairly}.


\section{Preliminaries}
For every $k \in \mathbb{N}$, let $[k]=\{1,\dots,k\}$.
Let $N \coloneqq [n]$ be a set of $n$ \emph{agents} and $M \coloneqq \{o_1, \ldots, o_m\}$ be a set of $m$ \emph{items}.
For each $i \in N$, $G_i \subseteq M$ denotes the subset of items
considered as \emph{goods} and $C_i \coloneqq M \setminus G_i$ is the set of items considered as \emph{chores} by agent $i$.
Items that are goods (chores) for all agents are referred to as \emph{common goods} (similarly, \emph{common chores}) i.e., $\bar{G} \coloneqq \bigcap_{i \in N} G_i$
($\bar{C} \coloneqq \bigcap_{i \in N} C_i$). 


%


\paragraph{Preferences.}
We consider \emph{lexicographic preferences} over all possible subsets of mixed items, through a linear order that specifies the \textit{importance ordering} of items for each agent.
Thus, for each agent $i \in N$ there is an associated importance ordering $\rhd_i$ that is a linear order over $M$. Thus, an \textit{importance profile} is simply denoted by $\rhd \coloneqq (\rhd_1, \ldots, \rhd_n)$.
We use ``$+$'' (or ``$-$'') in superscript to denote that an item is a good (or a chore) in an importance ordering. 
For example, 
\begin{equation}
\label{eq:example}
\rhd_i : \ o^+_1 \rhd o^-_2 \rhd o^+_3,
\end{equation}
means that agent $i$ considers $o_1$ and $o_3$ as goods and $o_2$ as a chore.
%
%
%
%
%
%
Given an importance ordering $\rhd_i$ and a subset of items $X \subseteq M$, let $\rhd_i(k, X)$ denote the $k$-th most important item 
in $X$ according to $\rhd_i$.
In case $X=M$, we will write simply $\rhd_i(k)$ for brevity.

Agent $i$'s lexicographic preference $\succ_i$ is a strict linear order over all possible subsets of items, which is defined based on its importance ordering $\rhd_i$ as follows:  
For every non-identical $X, Y \subseteq M$, we say that $X \succ_i Y$
if and only if
\(
    \rhd_i(1, X \triangle Y) \in (X \cap G_i) \cup (Y \cap C_i),
\)
where $\triangle$ denotes a symmetric difference
(formally, $A \triangle B = (A \cup B) \setminus (A \cap B)$
for every sets $A$ and $B$).
In other words, $X$ is preferred to $Y$ if and only if the most important item
on which they differ is either a good in $X$ or a chore in $Y$.
For every $X, Y \subseteq M$, we will write $X \succeq_i Y$ if $X \succ_i Y$ or $X = Y$. 
For instance, based on agent $i$'s preference stated above 
in~\eqref{eq:example},
we have
\(
    \{ o^+_1, o^+_3 \} \succ_i
    \{ o^+_1 \} \succ_i 
    \{ o^+_1, o^-_2, o^+_3 \} \succ_i 
    \{ o^+_1, o^-_2 \} \succ_i 
    \{ o^+_3 \} \succ_i 
    \emptyset \succ_i 
    \{ o^-_2, o^+_3 \} \succ_i 
    \{ o^-_2 \}.
\)
%
%
%

\paragraph{Terrible Chores and Separable Preferences.}
For disjoint subsets $X, Y \subseteq M$,
we use a shorthand notation $X \rhd_i Y$ to say that
for every $x \in X$ and $y \in Y$ it holds that
$x \rhd_i y$.
A set of \emph{terrible chores}
is a set of chores more important than any good, i.e.,
a maximal set $C^*_i \subseteq C_i$
such that $C^*_i \rhd_i G_i$
(note that, if $G_i = \emptyset$, then $C^*_i = C_i$).
An importance ordering $\rhd_i$ is \emph{separable} if
either $C_i \rhd_i G_i$ or $G_i \rhd_i C_i$. 
In other words, in separable ordering
either all chores are terrible or every good is more important than every chore.

\paragraph{Instance.}
An \emph{instance of the allocation problem with mixed items} (a mixed instance)
is a four-tuple $(N,M,G,\rhd)$, where
$G \coloneqq (G_i)_{i\in N}$ and $\rhd \coloneqq (\rhd_i)_{i\in N}$.
An instance is \textit{goods-only} if $G = M$,  \textit{chores-only} if $G = \emptyset$, and is \textit{objective} if $G_{i} = G_{j}$ for every $i,j\in N$. 
%
An instance $(N,M,G,\rhd)$ is separable if
the importance orderings are separable.
Note that separable instances can be seen as a special extension of lexicographic preferences over mixed items with the assumption that for every agent
either chores are more important than goods
or goods than chores.
On the other hand, lexicographic preferences can be seen as a special case of additive preferences in which the magnitude of valuations grow exponentially in the importance ordering.
Figure~\ref{fig:classes} illustrate the inclusion relation between different lexicographic extensions.

A \emph{mixed instance with terrible chores} is a (possibly objective) instance
in which $C^*_i \neq \emptyset$ for every $i \in N$.
The following example illustrates such an instance.

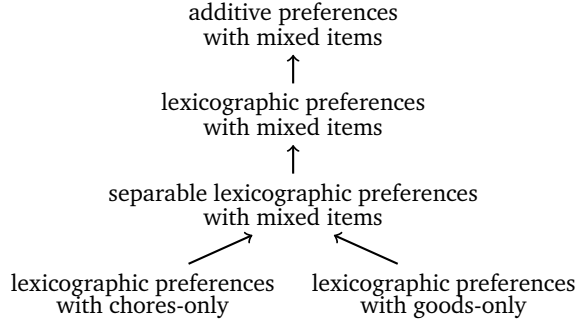
\begin{figure}[t]
    \centering
    \begin{tikzpicture}\small
      \tikzset{
        blank/.style={minimum size=0.001cm},
        edge/.style={->,draw,thick}
      }
      \def\sx{4cm} 
      \def\sy{1.2cm} 
      \def\x{0cm}
      \def\y{0.3cm}
      \def\txtsize{\footnotesize}
      \node[blank] (C) at (\x + 0*\sx, 0*\sy + \y) {\txtsize lexicographic preferences};
      \node[blank] (C_) at (\x + 0*\sx, 0*\sy) {\txtsize with chores-only};
      \node[blank] (G) at (\x + 1*\sx, 0*\sy + \y) {\txtsize lexicographic preferences};
      \node[blank] (G_) at (\x + 1*\sx, 0*\sy) {\txtsize with goods-only};
      \node[blank] (SepLex) at (\x + 0.5*\sx, 1*\sy + \y) {\txtsize separable lexicographic preferences};
      \node[blank] (SepLex_) at (\x + 0.5*\sx, 1*\sy) {\txtsize with mixed items};
      \node[blank] (Lex) at (\x + 0.5*\sx, 2*\sy + \y) {\txtsize lexicographic preferences};
      \node[blank] (Lex_) at (\x + 0.5*\sx, 2*\sy) {\txtsize with mixed items};
      \node[blank] (Add) at (\x + 0.5*\sx, 3*\sy + \y) {\txtsize additive preferences};
      \node[blank] (Add_) at (\x + 0.5*\sx, 3*\sy) {\txtsize with mixed items};
      
      \path[edge]
      (C) edge (SepLex_)
      (G) edge (SepLex_)
      (SepLex) edge (Lex_)
      (Lex) edge (Add_)
      ;
    \end{tikzpicture}
    \caption{Inclusion relation in different lexicographic extensions.}
    \label{fig:classes}
\end{figure}

%

\begin{example} \label{ex:efx+po}
    Consider a mixed instance with three agents, six items and a profile as follows. The set of common goods is $\bar{G} = \{o_{5}^{+}, o_{6}^{+}\}$, and the set of common chores is $\bar{C} = \{o_{1}^{-}, o_{2}^{-}\}$.
    This mixed instance contains terrible chores because every agent has a top item as a chore, i.e., $\rhd_i(1) \in C_i$.
    In fact, it is a separable instance as well.
    \begin{align*}
        1 &: \quad o_{1}^{-} \rhd \underline{o_{2}^{-}} \rhd o^{-}_3 \rhd \underline{o^{+}_4} \rhd o_{5}^{+} \rhd o_{6}^{+} \\
        2 &: \quad \underline{o_{1}^{-}} \rhd o_{2}^{-}  \rhd o^{-}_3 \rhd o^{-}_4 \rhd  \underline{o_{5}^{+}} \rhd \underline{o_{6}^{+}}  \\
        3 &: \quad o_{1}^{-} \rhd o_{2}^{-}  \rhd \underline{o^{+}_3} \rhd o^{+}_4 \rhd  o_{5}^{+} \rhd o_{6}^{+}
    \end{align*}
    (underline denotes an allocation described in \cref{sec:separable}).
\end{example}

\paragraph{Allocations.}
An allocation $A \coloneqq (A_i)_{i \in N}$ is a partition of $M$ such that $A_i \subseteq M$ is agent $i$'s bundle.
An allocation is \emph{complete} if all items in $M$ are assigned,
i.e., $\bigcup_{i\in N}  A_i = M$ and is \emph{partial} otherwise. Unless explicitly stated, we assume that an allocation is complete.

\paragraph{Envy-freeness.}
Given a pair of agents $i, j \in N$, agent $i$ \emph{envies} $j$ if
$A_j \succ_i A_i$.
Allocation $A$ is \emph{envy-free} (\EF{}), if for every pair of agents $i, j \in N$, we have $A_i \succeq_i A_j$.
Allocation $A$ is \emph{envy-free up to one item} (\EF{1}),
    if for every $i,j \in N$ such that $i$ envies $j$,
    there is $g \in G_i \cap A_j$ such that
    $A_i \succeq_i A_j \setminus \{g\}$ or
    there is $c \in C_i \cap A_i$ such that
    $A_i \setminus \{c\} \succeq_i A_j$.
Allocation $A$ is \emph{envy-free up to any item} (\EFX{}),
    if for every $i,j \in N$ such that $i$ envies $j$, it holds that 
    for every $g \in G_i \cap A_j$ we have
    $A_i \succeq_i A_j \setminus \{g\}$ and
    for every $c \in C_i \cap A_i$ we have
    $A_i \setminus \{c\} \succeq_i A_j$.






\paragraph{Maximin Share.}
The \emph{maximin share} (\MMS{}) of an agent is the most preferred bundle it can guarantee by creating an $n$-partition and receiving the worst one. Formally, for agent $i\in N$,
\(
    \MMS{}_i \coloneqq \max_{A \in \Pi_n} \min \{A_1,\dots,A_n\},
\)
where $\Pi_n$ is the set of all $n$-partitions of $M$, and $\max$ and $\min$ denote the most preferred
and the least preferred bundles according to $\succ_i$, respectively.
An allocation $A$ satisfies \emph{maximin share}, if for every $i \in N$ it holds that $A_i \succeq_i \MMS{}_i$.
In our setting \EFX{} implies both \EF{1} and \MMS{}, but the converse is not true. Moreover, \EF{1} and \MMS{} do not imply each other.




\paragraph{Economic  Efficiency.}
A (possibly partial) allocation $A$ \emph{Pareto dominates} allocation $B$
if $A$ assigns the same set of items as $B$, i.e.,
$\bigcup_{i\in N}  A_i = \bigcup_{i\in N}  B_i$, and
$A_i \succeq_i B_i$ for every $i \in N$
and there exists $i \in N$ such that $A_i \succ_i B_i$.
Allocation $A$ is \emph{Pareto optimal} (PO), if it is not Pareto dominated by any other allocation. 
In Section~\ref{sec:mms+po},
we also consider \emph{rank maximality} (RM),
which is a stronger efficiency notion.
Intuitively, it means that each item is given to an agent
that values it the most.
We give a formal definition in Appendix~\ref{app:mms+rm}.

We note that \EFX{} and \PO{} allocations always exist
and can be efficiently computed in every instance
without terrible chores,
i.e., when at least one agent sees its most important item as a good  \citep{hosseini2022fairly}.
Thus, we primarily focus on instances with terrible chores.

\paragraph{Serial Dictatorship.}
An \emph{ordering} of agents is a sequence
$\sigma  = (\sigma_1, \dots , \sigma_n)$ such that
$\sigma_i \in N$ denotes the $i$-th agent in the sequence.
A \emph{quota} vector $q = (q_1, \dots, q_n)$
is a vector of integers that we assign to each agent.
A \emph{serial dictatorship} mechanism, prescribed by an ordering $\sigma$ and a quota $q$, proceeds as follows:
starting from some partial (possibly empty) allocation $A$,
in each step, $i \in [k]$,
if there are still unallocated items,
we take the $q_{\sigma_i}$ most preferred ones by agent $\sigma_i$
(i.e., the most important goods
and then the least important chores if there are not enough goods left)
and add it to the bundle of this agent.


\section[Envy-Freeness up to Any Item (EFX)]{Envy-Freeness up to Any Item (\EFX{})} \label{sec:EFX}
Recall that an \EFX{} allocation may fail to exist for mixed instances \citep{hosseini2022fairly}. This non-existence crucially relies on a set of common chores with the highest importance ordering, i.e., the terrible chores. Otherwise, if there is at least one agent with the top item as a good, an \EFX{} and \PO{} allocation can be computed efficiently under lexicographic preferences.
Thus, we focus on mixed instances with terrible chores and show that deciding whether an \EFX{} allocation exists is computationally hard under lexicographic preferences, which subsequently implies hardness for additive preferences with mixed items.
The formal proof is relegated to \cref{app:efx:hardness}.



\begin{restatable}{theorem}{thmefxhradness}
\label{thrm:efx:hardness}
Deciding whether there exists an \EFX{} allocation
for a given lexicographic mixed instance is NP-complete.
\end{restatable}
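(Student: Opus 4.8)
The plan is to establish NP-completeness by (i) a routine membership argument and (ii) a polynomial-time reduction from a known NP-complete problem, exploiting the structural rigidity that terrible chores impose on any candidate \EFX{} allocation. For membership in NP, the certificate is the allocation $A$ itself; since a lexicographic preference $\succ_i$ is finitely describable via the importance ordering $\rhd_i$ together with the good/chore labeling, for each ordered pair $(i,j)$ one can check in polynomial time whether $i$ envies $j$ and, if so, whether the \EFX{} condition holds by comparing $A_i$ against $A_j \setminus \{g\}$ for the single relevant worst good $g \in G_i \cap A_j$ (the one minimizing $\rhd_i$-importance, equivalently the least important good of $i$ in $A_j$) and against $A_i \setminus \{c\}$ for the single relevant worst chore $c \in C_i \cap A_i$. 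So only $O(n^2)$ comparisons of bundles are needed, each doable in $O(m)$ time by scanning $\rhd_i$.

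For NP-hardness I would reduce from a problem whose combinatorial core is ``can a ground set be split into pieces each meeting a feasibility constraint'' — most naturally a restricted variant of \textsc{Exact Cover by 3-Sets} or a balanced/bounded variant of \textsc{Set Splitting} or \textsc{Partition Into Triangles}, since the \EFX{} constraint with terrible chores is essentially a simultaneous lower-bound-and-upper-bound requirement on how many terrible chores each agent holds. The key observation driving the gadget: in an objective mixed instance where $\bar C$ contains the terrible chores, any \EFX{} allocation must distribute the common terrible chores almost evenly — if some agent $i$ holds two more terrible chores than some agent $j$, then $i$ envies $j$, and removing one terrible chore from $A_i$ still leaves $i$ envious, violating \EFX{} (because a terrible chore, being more important than any good, dominates all subsequent differences). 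This forces a near-equal split of the terrible-chore block, and then the \emph{identities} of which terrible chores go where must be compatible with the lower-importance goods/chores in a way that encodes the covering/partition instance. I would set up one agent per ``element-slot'' and use carefully chosen importance orderings so that a valid \EFX{} allocation exists iff the combinatorial instance is a yes-instance; the goods of lower importance act as ``tie-breakers'' that can only be allocated envy-freely when the terrible-chore split respects the intended structure.

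The main obstacle I anticipate is controlling the \emph{downstream} envy once the terrible-chore layer is pinned down: \EFX{} is a global condition across all pairs, so after forcing the near-equal split of terrible chores one must ensure the reduction's remaining items (the non-terrible chores and the goods) can always be completed to a full \EFX{} allocation in the yes-case, while \emph{every} completion fails in the no-case. This typically requires a two-part argument — a ``canonical completion'' lemma showing that from any valid split of the terrible chores one can greedily extend (e.g. via a serial-dictatorship-style routine on the remaining items, using that below the terrible chores the instance behaves like a goods-then-chores lexicographic instance where \EFX{} is tractable), plus a ``propagation of infeasibility'' argument that an invalid split already breaks \EFX{} at the terrible-chore layer regardless of the rest. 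Getting the gadget's parameters (number of copies of each terrible chore, the exact quotas implicitly enforced, the placement of low-importance goods) so that these two halves are airtight is the delicate part; everything else is bookkeeping. Finally, since lexicographic preferences are a subdomain of additive preferences (as noted in the excerpt and Figure~\ref{fig:classes}), the same instance witnesses hardness for additive mixed instances, which I would state as an immediate consequence.
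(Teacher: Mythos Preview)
Your NP-membership argument is fine, and you correctly home in on \textsc{X3C} and on terrible chores as the lever. But the mechanism you sketch for hardness is not the one that works, and as stated it will not close. Your central structural claim---that \EFX{} forces the common terrible chores to be split near-evenly across agents---presupposes a single global pool of chores that are terrible for \emph{every} agent. Under that assumption you only get a mild load-balancing constraint (at most one ``extra'' terrible chore per agent), which by itself is far too weak to encode covering: in the no-case you can still spread the terrible chores evenly and then place the goods greedily, so the ``no cover $\Rightarrow$ no \EFX{}'' direction collapses. Relatedly, your ``one agent per element-slot'' layout and your placement of goods at the bottom as tie-breakers both point in the wrong direction.

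The paper's reduction is also from \textsc{X3C}, but the geometry is inverted relative to yours. Agents correspond to \emph{subsets} (two identical agents $2j-1,2j$ per $S_j$), and---this is the missing idea---the instance is built so that the \emph{same} common chore is terrible for some agents and not for others: agent $2j$'s ordering is $(C\setminus C^j)\rhd G\rhd C^j$, so precisely the chores for elements of $S_j$ sit \emph{below} the goods for that agent and above them for everyone else. A single good $g_j$ given to agent $2j$ can then ``cover'' all $6n$ chores in $C^j$, but nothing outside it. The argument is not about balanced loads; it is a counting argument on \emph{uncovered} chores (chores an agent holds above every good it holds): \EFX{} forces at most one uncovered chore per agent and at least one agent with none, hence at most $2n-1$ in total; absence of a size-$k$ cover forces some element's $2n$ chore-copies to all be uncovered, hence at least $2n$. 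Your near-equal-split picture misses this per-agent asymmetry in what counts as terrible, and placing goods at the bottom (rather than in the middle, separating terrible from non-terrible chores) is exactly what prevents the gadget from firing.
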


\begin{proof}[Proof (sketch).]
We prove the hardness by a reduction from \textsc{Exact Cover by 3-Sets} (\textsc{X3C}).
In an \textsc{X3C} instance, we have a universe
$\mathcal{U} = \{u_1,\dots,u_{m}\}$ and
a family of its three-element-subsets,
$\mathcal{S} = \{S_1,\dots,S_n\}$.
The problem, which is known to be NP-complete~\citep{johnson1979computers},
is to decide whether there exists an exact cover
$\mathcal{K} \subset \mathcal{S}$ of size $k$,
such that $\bigcup_{S_j \in \mathcal{K}} S_j = \mathcal{U}$.

For every such \textsc{X3C} instance,
we construct a corresponding
objective mixed items instance $(N,M,G,\rhd)$ as follows.
For every element $u_i \in \mathcal{U}$,
we take $2n$ common chores $c_{i,1},\dots,c_{i,2n}$.
We add to it $k$ common goods $g_1,\dots,g_k$
(the assignment of which will correspond to
the choice of subsets in $\mathcal{K}$),
which gives us $|M| = 2mn + k$ items in total.
Next, for every subset $S_j \in \mathcal{S}$
we take two agents $2j-1$ and $2j$
and we give them identical importance orderings,
i.e., $\rhd_{2j-1} = \rhd_{2j}$.
Specifically, their importance ordering consists of
three ``blocks'':
first there are all chores corresponding to elements $u_i$
such that $u_i \not \in S_j$,
then there are $k$ goods $g_1,\dots,g_k$,
and at the end there are all chores corresponding to elements $u_i \in S_j$.
For example, if we had $S_j = \{u_1,u_2,u_3\}$,
then the importance ordering of agents $2j-1$ and $2j$ would be
\(
    c_{4,1} \rhd \dots \rhd c_{m,2n} \rhd g_1 \rhd \dots \rhd g_k \rhd c_{1,1} \rhd \dots \rhd c_{3,2n}.
\)

We prove that there is a cover in an \textsc{X3C} instance,
if and only if,
there is an EFX allocation in the corresponding mixed item instance.
If there is a cover, without loss of generality we assume that $\mathcal{K} = \{S_1,\dots,S_k\}$
and show that allocation $A=(A_1,\dots,A_{2n})$, where
\begin{equation}
\label{eq:efx:hardness}
    A_{j} \!= \!\begin{cases}
        \!\{g_{j/2}\} \!\cup\! \{c_{i,l} \!\in\! \bar{C} : u_i \!\in\! S_{j/2}\}, & \!\!\!\mbox{if } j \!\in\! \{2,\! ...,2k\},\\
        \emptyset, & \!\!\!\mbox{otherwise,}
    \end{cases}
\end{equation}
is EFX (and also PO).
If there is no cover, we analyze the number of ``uncovered'' chores in an allocation,
i.e., chores received by an agent that are more important than every good it received.
We show that EFX would imply that there can be at most $2n-1$ such chores,
but no set cover implies that there are at least $2n$ of them---a contradiction.
\end{proof}

In the proof of Theorem~\ref{thrm:efx:hardness}, we show that
the allocation defined in equation~\eqref{eq:efx:hardness}
(i.e., an \EFX{} allocation that exists
when there is a set cover in an \textsc{X3C} instance)
is not only \EFX{} but also \PO{}.
This implies that deciding whether there exists
an allocation that is both \EFX{} and \PO{} is also \NPH{}
(we note that the problem of verifying if an allocation is PO in polynomial time remains open, thus we cannot claim NP-completeness).
\begin{corollary}
\label{cor:efx+po:hardness}
Deciding whether there exists an \EFX{} and PO allocation
for a given lexicographic mixed instance is NP-hard.
\end{corollary}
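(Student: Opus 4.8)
The plan is to piggyback directly on the reduction from \textsc{X3C} already built in the proof of Theorem~\ref{thrm:efx:hardness}. That reduction takes an \textsc{X3C} instance $(\mathcal{U},\mathcal{S})$ and produces an objective lexicographic mixed instance $(N,M,G,\rhd)$ such that an \EFX{} allocation exists if and only if $(\mathcal{U},\mathcal{S})$ has an exact cover of size $k$. I would show the very same instance serves as a reduction for the \EFX{}$+$\PO{} problem, i.e. that $(N,M,G,\rhd)$ admits an allocation that is simultaneously \EFX{} and \PO{} if and only if the \textsc{X3C} instance is a yes-instance. Since \textsc{X3C} is \NPH{}, this establishes the claim. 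The one subtlety the statement already flags is that we do not know how to verify \PO{} in polynomial time, so we only get \NPH{}, not membership in \NP{}; the plan does not attempt to resolve that.

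The two directions are as follows. For the ``only if'' direction (yes-instance $\Rightarrow$ \EFX{}$+$\PO{} allocation exists): the proof of Theorem~\ref{thrm:efx:hardness} exhibits the explicit allocation $A$ of equation~\eqref{eq:efx:hardness} when a size-$k$ cover $\mathcal{K}$ exists, and the remark immediately preceding the corollary asserts that this $A$ is not merely \EFX{} but also \PO{}. So in this direction there is nothing further to do beyond invoking that fact. For the ``if'' direction (\EFX{}$+$\PO{} allocation exists $\Rightarrow$ yes-instance): this is immediate and does not even need \PO{}. Any allocation that is \EFX{}$+$\PO{} is in particular \EFX{}, and the ``no cover $\Rightarrow$ no \EFX{} allocation'' half of Theorem~\ref{thrm:efx:hardness}'s proof (the counting argument on the number of ``uncovered'' chores, forcing at least $2n$ of them while \EFX{} permits at most $2n-1$) shows the \textsc{X3C} instance must then be a yes-instance.

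Putting these together: the map from \textsc{X3C} instances to mixed instances is the identity on the reduction of Theorem~\ref{thrm:efx:hardness}, it is polynomial-time computable (already argued there), and a yes-instance of \textsc{X3C} maps to an instance with an \EFX{}$+$\PO{} allocation, while a no-instance maps to an instance with no \EFX{} allocation at all, hence certainly no \EFX{}$+$\PO{} allocation. This is a valid many-one reduction from an \NPH{} problem, so deciding the existence of an \EFX{}$+$\PO{} allocation is \NPH{}.

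There is essentially no hard part here, since the heavy lifting --- both the construction and the proof that the witness allocation~\eqref{eq:efx:hardness} is \PO{} --- is carried out inside Theorem~\ref{thrm:efx:hardness} and its surrounding discussion. The only thing one must be careful about is exactly the caveat in the parenthetical: we cannot upgrade to \NPC{} because checking \PO{} of a given allocation under lexicographic mixed preferences is not known to be in \P{} (indeed the number of candidate Pareto-dominating allocations is exponential and no polynomial certificate is known), so a ``yes'' answer does not obviously come with a polynomial-time verifiable certificate. Hence the statement is, and should remain, \NPH{} rather than \NPC{}.
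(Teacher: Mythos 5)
Your proposal matches the paper's argument exactly: reuse the \textsc{X3C} reduction from Theorem~\ref{thrm:efx:hardness}, note that the witness allocation in equation~\eqref{eq:efx:hardness} is shown there to be both \EFX{} and \PO{} in the yes-case, and that the no-case rules out any \EFX{} allocation (hence any \EFX{}$+$\PO{} one), with the same caveat about only claiming \NPH{} because \PO{} verification is not known to be polynomial. The proposal is correct and takes essentially the same approach as the paper.
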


The constructions in the proof of \cref{thrm:efx:hardness} only used objective instances, where all agents agree on whether an item is a chore or a good. Thus, these computational hardness results hold for all mixed instances and do not rely on subjective views of agents.

\subsection[EFX and PO: Separable Preferences]{\EFX{} and \PO{}: Separable Preferences} \label{sec:separable}

An important feature of our construction in the proof
of Theorem~\ref{thrm:efx:hardness}
is that each agent has some terrible chores
and some other (non-terrible) chores
that are separated by several goods in its importance ordering.
%
%
%
In this section, we analyze the case where
either all chores are terrible or
all are less important than every good,
i.e., the separable lexicographic preferences.
We show that such a constraint
enables us to devise an algorithm that
computes an \EFX{} and \PO{} allocation
for every separable instance.

Algorithm~\ref{alg:efx+po}
finds one such allocation
for every mixed instance with separable preferences
that contains terrible chores.
It extends the algorithm by \citet{hosseini2022fairly} for \EFX{} and \PO{} allocations in instances in which the most important item of one of the agents is a good (in fact Phase 2 of our algorithm can be seen as running this algorithm on a smaller instance).

\paragraph{Algorithm.}
Fix any ordering of agents $1, \dots, n$.
The algorithm runs in two phases. In Phase 1, we allocate all common chores (items in $\bar{C}$) and goods to agents that receive these chores.
First, all common chores are allocated through a serial dictatorship with ordering $(1,\dots,n)$ and quotas $q$, where $q_i = 1 $ for $i \in N$,
except $q_1 = |\bar{C}| - n + 1$, if $|\bar{C}| > n$.
Then starting from the last agent which received a common chore (i.e., the worst chore), say agent $z$, in the reverse order, i.e., $z,z-1,\dots,1$,
we add to each agent's bundle all the unassigned items that it considers as goods.
At the end of Phase 1, all remaining items are considered as a good for at least one agent in $N$, but considered as a chore by all agents who received an item in Phase 1, i.e., agents in $[z]$.
This is crucial for ensuring that the final allocation will be \EFX{}.
In Phase 2, we distribute the remaining items in such a way
that each is assigned to an agent for which it is a good.
Specifically, we move through the positions in importance orderings,
one by one, starting from the first position with an unassigned item.
For each position $k$, we find an agent $i \in N'$,
that has yet unassigned good at position $k$, and assign this good plus all remaining items that only $i$ considers as goods (but no other remaining agent considers as goods).
This process repeats until no item remains unassigned. 
Example~\ref{ex:separable} illustrates the execution of Algorithm~\ref{alg:efx+po}).


\begin{example}\label{ex:separable}
    Consider the mixed instance with separable preferences given in \cref{ex:efx+po}. 

    \cref{alg:efx+po} starts by running a serial dictatorship with a fixed ordering of $(1,2,3)$ to allocate all terrible common chores, i.e., $\{o_{1}^{-}, o_{2}^{-}\}$. 
    Thus, agents $1$ and $2$ receive $o_{2}^{-}$ and $o_{1}^{-}$ respectively, while agent $3$ receives nothing.
    Then, starting from agent $2$, the last agent who received a chore (the worst chore), in the reverse ordering i.e., $(2,1)$, agents receive all their remaining goods (if any). Therefore, at the end of Phase 1, $A_1 = \{o^-_2, o^+_4\}$, $A_2 = \{o^-_1, o^+_5, o^+_6\}$ and $A_3 = \emptyset$.
    
    In Phase 2, the only remaining item $o_3$ is allocated to agent $3$ who sees it as a good (while agents 1 and 2 consider $o_3$ as a chore). 
    The final allocation is underlined in \cref{ex:efx+po}.

\end{example}

\begin{algorithm}[t]\small
    \caption{Computing an \EFX{} and \PO{} allocation for separable lexicographic preferences}
    \label{alg:efx+po}
    \begin{algorithmic}[1]
        \REQUIRE A mixed instance $( N, M, G, \rhd)$ with separable preferences that contains terrible chores
        \ENSURE An allocation $A$ that is \EFX{} and PO
        \STATE $A := (\emptyset, \dots, \emptyset)$
        
        $\triangleright$ \textsc{Phase 1:}
        \STATE assign common chores, $\bar{C}$, by a serial dictatorship
        with ordering $(1,\dots,n)$ and quotas $q$,
        where $q_1 = \max(1,|\bar{C}|-n+1)$ and $q_i = 1$, for $i \in N \setminus \{1\}$ 
        \STATE $N' := \{i \in  N: A_i = \emptyset \}, \quad H := M \setminus \bar{C}$
        \STATE $z := \max\{N \setminus N'\}$
        \FOR{$i \in (z, z-1, \dots, 1)$}
            \STATE $A_i \leftarrow A_i \cup (G_i \cap H), \quad H \leftarrow H \setminus A_i$
        \ENDFOR
        
        $\triangleright$ \textsc{Phase 2:}
        \FOR{$k \in (1,\dots,|M|)$} 
            \WHILE{there is $i \in N'$ such that $\rhd_i(k) \in H \cap G_i$}
                \STATE $g_i := \rhd_i(k), \quad A_i \leftarrow \{g_i\}$
                \STATE $A_i \leftarrow A_i \cup \big( H \cap G_i \setminus \bigcup_{j \in N' \setminus \{i\}} G_j \big)$
                \STATE $N' \leftarrow N' \setminus \{i\}, \quad H \leftarrow H \setminus A_{i}$
            \ENDWHILE
        \ENDFOR
            \RETURN $A$ 
        \end{algorithmic}
\end{algorithm}
Before proving the correctness of the algorithm,
let us show a general result (Lemma~\ref{lmm:po})
concerning the serial dictatorship mechanism.
Assume that there is a PO partial allocation
and the set of common chores $H$ such that
for each agent $H$ is more important than its bundle
and all its goods (i.e., $H$ contains terrible chores for all agents).
We show that extending such partial allocation
by allocating items in $H$ through the serial dictatorship
with arbitrary ordering and quotas
will preserve PO.
Since the order of allocating items does not affect the final allocation, \cref{lmm:po} can be used to show the correctness of \cref{alg:efx+po}.

\begin{lemma}
\label{lmm:po}
    For every instance $(N, M, G, \rhd)$, subset of common chores $H \subseteq \bar{C}$, 
    and partial allocation $B$ of items in $M \setminus H$ that is PO,
    if for each $i \in N$ it holds that $H \rhd_i (G_i \cup B_i)$,
    then every allocation $A$ obtained by
    extending $B$ by the serial dictatorship
    with an arbitrary ordering and quotas is PO.
\end{lemma}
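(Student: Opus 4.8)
The plan is to show that any Pareto improvement of the final allocation $A$ can be "pulled back" to a Pareto improvement of the partial allocation $B$, contradicting the assumed Pareto optimality of $B$. So suppose for contradiction that some allocation $A'$ (of the same item set $\bigcup_i A_i = M$, or rather of $M\setminus(\text{whatever serial dictatorship left unallocated})$, but let us just say the set of items allocated in $A$) Pareto dominates $A$. First I would record the key structural fact about $A$ produced by the serial dictatorship: since $H$ consists of chores that every agent ranks above all its goods and above its whole bundle $B_i$, each agent, when it is its turn with quota $q_{\sigma_i}$, has only chores from $H$ among the items it cares to pick up first — so the items of $H$ that land in any bundle $A_i\setminus B_i$ are exactly the $\rhd_i$-least important chores of $H$ that were still available at that step. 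In particular, for each agent $i$, the set $A_i\cap H$ is a "downward-closed-from-the-bottom" segment of $H$ in the order $\rhd_i$ relative to what remained; more usefully: if $c\in A_i\cap H$ and $c'\in H$ with $c\rhd_i c'$ (i.e. $c'$ is a worse chore for $i$ than $c$), then $c'$ was either taken by $i$ too or taken by an agent earlier in $\sigma$.

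The main step is then an exchange/potential argument. Given the dominating $A'$, I would compare $A_i$ and $A'_i$ bundle by bundle. Since $A'_i \succeq_i A_i$ for all $i$ and the preference is lexicographic, and since every agent's bundle $A_i$ decomposes as $B_i$ together with a tail-segment of the terrible chores $H$ (with all of $B_i$'s items strictly more important — wait, no: $H\rhd_i(G_i\cup B_i)$ means the $H$-items are \emph{more} important, hence more impactful lexicographically), the first point of difference between $A'_i$ and $A_i$ will, if it is favorable to $i$, have to be about the $H$-chores: either $A'_i$ omits some chore of $H$ that $A_i$ had, or the goods/other items only matter after all of $H$ is settled. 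Concretely I would argue: define, for the $H$-part, that $A'$ restricted to $H$ must give \emph{at least as good} a sub-allocation to every agent as $A$ does; by the optimality of serial dictatorship (a greedy argument: agent $\sigma_1$ already got its $q_{\sigma_1}$ best available — here least bad — chores, so it cannot do strictly better; then induct down the order $\sigma$), the $H$-part of $A'$ gives every agent a $\succ_i$-equivalent sub-bundle of $H$, and in fact we may assume $A'\cap H = A\cap H$ agent-wise after harmless relabeling of identical chores. Once the $H$-parts coincide, $A'$ restricted to $M\setminus H$ is an allocation of exactly the items $B$ allocated, it weakly dominates $B$ agentwise (the comparison $A'_i\succeq_i A_i$ now reduces to the $M\setminus H$ coordinates since the $H$-coordinates agree), and it strictly dominates for some agent — which contradicts $B$ being PO.

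The part I expect to be the real obstacle is making the "serial dictatorship is greedy-optimal on $H$, so we can assume $A'\cap H = A\cap H$" step fully rigorous, in particular handling two subtleties: (i) the items of $H$ are genuinely distinct items, not interchangeable tokens, so "equivalent sub-bundle" must be phrased via lexicographic indifference classes and one must check that swapping an agent's chore $c$ for an equally-or-less-important chore $c'$ held by someone else can be done without hurting the other agent — this uses precisely that the serial dictatorship order gives earlier agents the globally-worst-available chores, so any chore someone else holds is $\succeq$ (as a chore, i.e. less bad) than what a later agent could demand; and (ii) if $A$ leaves some items of $M$ unallocated (a partial extension), one must be careful that $A'$ is over the same item set. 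I would isolate (i) as a short sub-claim: \emph{if $A'$ weakly Pareto dominates $A$ then there is $A''$ weakly Pareto dominating $A$ with $A''_i\cap H = A_i\cap H$ for all $i$ and $A''_i\supseteq$ (strict domination somewhere preserved)}, proved by a finite sequence of chore swaps along the order $\sigma$, each of which is Pareto-neutral because it moves a chore from a $\sigma$-later agent to a $\sigma$-earlier agent who (by construction of $A$) was willing to take even worse chores. Everything after that sub-claim is the routine reduction to PO-ness of $B$.
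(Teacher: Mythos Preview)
Your overall structure---split into ``$H$-parts of $A$ and $A'$ agree'' versus ``disagree,'' and in the first case pull back to a Pareto improvement of $B$---is exactly the paper's. Where you diverge is in handling the second case: you propose a swap argument to \emph{align} the $H$-parts, isolated as your sub-claim~(i). The paper does something much shorter: if $A$ and $A'$ differ on $H$, it picks the chore $c\in H$ that changes hands and was picked \emph{last} in the serial dictatorship, and looks at the agent $i$ with $c\in A'_i\setminus A_i$. Every chore $c'\in A_i\cap H$ with $c'\rhd_i c$ must have been picked by $i$ after $c$ was already gone (else $i$ would have taken $c$ instead), hence by the ``last'' choice $c'\in A'_i$ too. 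Combined with $H\rhd_i(G_i\cup B_i)$, the most important item in $A_i\triangle A'_i$ is a chore landing in $A'_i$, so $A_i\succ_i A'_i$---an immediate contradiction to Pareto domination, no swaps needed.

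Your swap sub-claim, as written, has a concrete gap. You invoke ``$\succ_i$-equivalent sub-bundles'' and ``harmless relabeling of identical chores,'' but in this model all items are distinct and each $\succ_i$ is a strict linear order on $2^M$: two different subsets of $H$ are never $\succ_i$-equivalent. So the swaps you describe are not Pareto-neutral in general, and your justification (``moves a chore from a $\sigma$-later agent to a $\sigma$-earlier agent who was willing to take even worse chores'') points the wrong way---handing an agent an extra chore always hurts it. The clean fix is precisely the paper's last-picked-chore trick, which replaces your entire sub-claim~(i) with a two-line contradiction.
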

\begin{proof}
    Assume by contradiction that there exists
    $A$ obtained by the serial dictatorship
    that is not PO.
    This means that there exists an allocation $A'$
    that Pareto dominates $A$.
    Now, let us consider two cases
    based on whether $A$ and $A'$ differ on assignment of chores in $H$.
    If this is true, then there exists a common chore, $c \in H$,
    that is assigned to different agents in $A$ and $A'$,
    i.e., there exists $i \in N$ such that $c \in A'_i \setminus A_i$.
    Let us take $c$ and $i$ such that, among all such chores, $c$ 
    was picked as the last one in the serial dictatorship leading to $A$.
    Observe that every chore $c' \in H \cap A_i$ such that $c' \rhd_i c$
    was picked by agent $i$ after $c$ was assigned
    (otherwise $i$ would pick $c$ instead).
    Hence, for every such $c'$ we have also $c' \in A'_i$
    (otherwise $c$ would not be the last picked chore
    that is assigned to different agents in $A$ and $A'$).
    Since $H \rhd_i (G_i \cup B_i)$,
    this implies that $A_i \succ_i A'_i$,
    which means that $A'$ does not Pareto dominate $A$---a contradiction.

    Finally, consider the case in which $A$ and $A'$
    assign chores in $H$ identically.
    By $B'$ let us denote the partial allocation obtained from $A'$
    by removing chores in $H$.
    Since $A'$ Pareto dominates $A$, it means that $B'$ Pareto dominates $B$.
    But that contradicts the fact that $B$ is PO.
\end{proof}


\begin{restatable}{theorem}{thrmefxposep}
\label{thrm:efx+po+sep}
Given a mixed instance with separable preferences that contains terrible chores, an \EFX{} and \PO{} allocation always exists and can be computed in polynomial time.
\end{restatable}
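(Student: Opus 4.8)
The plan is to prove correctness of Algorithm~\ref{alg:efx+po} in three parts: it terminates with a complete allocation, the output is \PO{}, and the output is \EFX{}. Termination and running time are routine: Phase~1 does one serial dictatorship pass plus one reverse pass over agents, and Phase~2 assigns at least one item per iteration of the while-loop while removing the corresponding agent from $N'$, so it halts in polynomially many steps. The key structural fact to establish first is the invariant claimed in the text: at the end of Phase~1, every still-unassigned item in $H$ is a good for at least one agent in $N'$ but a chore for every agent in $[z]$ (those who received something in Phase~1). The ``chore for everyone in $[z]$'' half holds because in a separable ordering an agent with a terrible chore has \emph{all} chores terrible, i.e.\ $C_i = C_i^*$ and $\bar{C} \rhd_i G_i$; after the reverse pass each agent in $[z]$ has grabbed all remaining goods it wanted, so whatever is left is a chore for it. The ``good for someone'' half holds because the instance has terrible chores means... actually because any item not in $\bar C$ is a good for some agent, and if that agent is in $[z]$ it would have taken the item in the reverse pass — so the item is a good for some agent in $N'$.

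For \PO{}, I would invoke Lemma~\ref{lmm:po} together with the structure of Phase~2. Start from the empty (trivially \PO{}) partial allocation and observe that before Phase~1's serial dictatorship, $H = \bar C$ satisfies $H \rhd_i (G_i \cup B_i)$ for every $i$ (since $B_i = \emptyset$ and all agents have terrible chores, so $\bar C$ is terrible for all), hence Lemma~\ref{lmm:po} applies and the allocation after assigning $\bar C$ is \PO{}. The reverse pass and Phase~2 only assign items to agents who consider them goods, and I claim these steps preserve \PO{}: any Pareto-dominating allocation $A'$ would have to reallocate some good $g$ away from the agent $i$ who holds it in $A$, but $g$ was assigned to $i$ either because $i$ was the earliest agent (in the reverse order, or in the position order of Phase~2) with $g$ as an unassigned good, or because only $i$ among remaining agents wanted it; in the latter case no one else can improve by taking $g$, and in the former case a careful ``last reallocated good'' argument à la Lemma~\ref{lmm:po} shows $i$ is strictly worse off in $A'$, contradicting Pareto domination. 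Making this last argument clean is the step I expect to be the main obstacle — one has to handle the interaction between Phase~1 goods and Phase~2 goods and argue no cyclic swap of goods can be a Pareto improvement, which is essentially the correctness argument of the \citet{hosseini2022fairly} algorithm that Phase~2 generalizes.

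For \EFX{}, I would check pairwise envy by cases on which phase the two agents were served in. An agent $i$ that received nothing (or only its own goods) envies no one, since every bundle is either $\emptyset$ or consists, from $i$'s viewpoint, of items that are... one must use that $i$'s bundle contains its most important goods. The crucial case is an agent $i \in [z]$ with a bundle of the form (one common chore $c$) $\cup$ (some goods): such $i$ may envy another agent $j$ whose bundle is ``cleaner'', but removing the single chore $c$ from $A_i$ leaves only goods of $i$ that are more important (in $\rhd_i$) than anything $j$ holds — this is exactly why Phase~1 gives each such agent only \emph{one} common chore and then lets them grab goods in reverse order — so $A_i \setminus \{c\} \succeq_i A_j$, giving \EFX{}. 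Agents served in Phase~2 hold only goods, so any envy toward them is removable by deleting a single good, and the position-by-position assignment ensures the envied good is more important to the envier than its own top good only in controlled ways; one verifies $A_i \succeq_i A_j \setminus \{g\}$ for the relevant $g$. Finally, I would combine: the algorithm runs in polynomial time and outputs an allocation that is both \EFX{} and \PO{}, which proves the theorem.
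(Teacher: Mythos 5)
Your overall plan follows the paper's route---prove correctness of Algorithm~\ref{alg:efx+po} by establishing termination, \PO{}, and \EFX{} separately---but the \PO{} part has a genuine gap that you yourself flag as ``the main obstacle'' and then do not close. The step you defer (showing that no reallocation of the goods assigned in the reverse pass of Phase~1 and in Phase~2 can be a Pareto improvement) is not a routine appeal to prior work; it is the bulk of the paper's proof. Moreover, your order of invoking Lemma~\ref{lmm:po} works against you: applying it with $B=\emptyset$ and $H=\bar C$ only certifies that the partial allocation of the common chores alone is \PO{}, and the lemma gives you nothing about subsequently layering goods on top of it. The paper instead exploits the remark that the order of allocating items does not affect the output: it defines $D_i = A_i \setminus \bar C$, proves directly that the goods-only partial allocation $D$ is \PO{} (via a two-stage extremal argument---first the largest $i \in N\setminus N'$ with $D'_i \neq D_i$ cannot gain because line~6 already gave it every remaining good; then, among Phase-2 agents, the one served earliest whose $g_i$ moves, and the one served latest whose bundle changes, each yield a contradiction using the fact that the extra items assigned in line~11 are goods for no other remaining agent), and only then applies Lemma~\ref{lmm:po} once with $H=\bar C$, using separability to get $\bar C \rhd_i D_i$. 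You would need to supply an argument of this kind; without it the \PO{} claim is unproven.

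The \EFX{} portion identifies the right case split (both agents in $N'$, one in $[z]$ and one in $N'$, both in $[z]$) and the right single-item removals, but the justification in the crucial case is slightly off: the reason $A_i\setminus\{c\}\succeq_i A_j$ for $i\in[z]$, $j\in N'$ is not that $i$'s goods are ``more important than anything $j$ holds'' but that $A_j\cap G_i=\emptyset$ (everything $i$ considers a good among the leftovers was already grabbed by $i$ in the reverse pass, so all of $A_j$ is a chore-free set containing no goods of $i$), combined with $A_i\setminus\{c\}\subseteq G_i$. Similarly, for two Phase-2 agents the paper uses that line~11 explicitly excludes $G_j$ for the other remaining agents, so the only good of $j$ in $A_i$ is $g_i$; you gesture at this but should state it, since it is exactly what makes the ``remove one good'' check succeed.
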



The proof is relegated to \cref{app:efx+po:separable}.
%
When at least one agent's top item is a good, an \EFX{} and \PO{} allocations are guaranteed to exist and can be computed efficiently \citep{hosseini2022fairly}.
Combining this with \cref{thrm:efx+po+sep} we obtain the following
computational and existence results for separable preferences.
\begin{corollary}
    \label{cor:efx+po}
    Given any mixed instance with separable preferences, an \EFX{} and \PO{} allocation always exists and
     can be computed in polynomial time.
\end{corollary}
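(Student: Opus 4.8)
The plan is to derive the corollary from a dichotomy on the importance profile, invoking Theorem~\ref{thrm:efx+po+sep} in one case and the algorithm of \citet{hosseini2022fairly} in the other. Assume $M \neq \emptyset$ (the empty case is trivial, since the empty allocation is \EFX{} and \PO{}). For a separable instance, exactly one of the following holds: (a) the most important item of every agent is a chore, or (b) the most important item of some agent is a good. The first step is to check that case (a) coincides with the instance containing terrible chores: if $\rhd_i(1) \in C_i$, then since $\rhd_i(1) \rhd_i g$ for every $g \in G_i$ we get $\rhd_i(1) \in C^*_i$, so $C^*_i \neq \emptyset$ (consistent with the convention $C^*_i = C_i$ when $G_i = \emptyset$); conversely, if $\rhd_i(1) \in G_i$, then no chore is more important than this good, hence $C^*_i = \emptyset$. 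Thus ``contains terrible chores'' is equivalent to ``every agent's top item is a chore,'' the two cases are exhaustive and mutually exclusive, and deciding which applies takes $O(n)$ time by inspecting $\rhd_i(1)$ for each $i \in N$.

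In case (a), the instance is separable and contains terrible chores, so Theorem~\ref{thrm:efx+po+sep} applies verbatim: Algorithm~\ref{alg:efx+po} returns an \EFX{} and \PO{} allocation in polynomial time. In case (b), the instance is a lexicographic mixed instance without terrible chores, so the result of \citet{hosseini2022fairly} guarantees existence of an \EFX{} and \PO{} allocation and its computation in polynomial time; separability plays no role here. Since both sub-procedures, together with the $O(n)$ case test, run in polynomial time, the overall algorithm does too, and it always outputs an \EFX{} and \PO{} allocation.

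There is no real obstacle: the statement is an immediate consequence of Theorem~\ref{thrm:efx+po+sep} and the cited prior work once the case split is in place. The only point that needs a line of argument is the equivalence between the presence of terrible chores and every agent's top item being a chore, which is a direct unpacking of the definition of $C^*_i$; everything else is bookkeeping on running times.
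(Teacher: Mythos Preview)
Your proposal is correct and follows essentially the same approach as the paper: the paper also derives the corollary by splitting into the case where some agent's top item is a good (invoking \citet{hosseini2022fairly}) and the case where every agent's top item is a chore (invoking Theorem~\ref{thrm:efx+po+sep}). You simply spell out in more detail the equivalence between ``contains terrible chores'' and ``every agent's top item is a chore'' and the running-time bookkeeping, both of which the paper leaves implicit.
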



\section[EF1 and PO]{EF1 and \PO{}} 
\label{sec:EF1+PO}

Despite the non-existence of \EFX{} and the computational hardness of deciding whether an instance admits such an allocation (\cref{thrm:efx:hardness}),
we identified a natural class of separable lexicographic preferences for which an \EFX{} and \PO{} allocation is always guaranteed to exist and can be computed efficiently (\cref{thrm:efx+po+sep}). 
This raises the question of whether focusing on weaker fairness notions, e.g., \EF{1} or \MMS{}, enables us to escape these negative results for the more general mixed lexicographic (but not necessarily separable) preferences.

In this section, we focus on \EF{1} and discuss the technical challenges in satisfying it with \PO{}. We then devise an efficient algorithm for finding \EF{1} and \PO{} when there are sufficiently many \textit{common} terrible chores, in particular, when there are at least $n-1$ terrible chores shared by all agents.

%


%


Before presenting our main result in this section, let us discuss the technical challenges in achieving \EF{1} and \PO{} for mixed lexicographic preferences.\footnote{The existence and computation of \EF{1} and \PO{} allocations remain open even for additive chores-only instances.}
For mixed instances, under additive or doubly monotone valuations,\footnote{Doubly (or item-wise~\citep{chen2020fairness}) monotone is a broad valuation class wherein each agent can partition items into those with non-negative (goods) or negative (chores) marginal utility.} an \EF{1} allocation (without \PO{}) can be computed efficiently through either the double round robin algorithm \citep{ACI+19fair} or a variant of the envy-graph algorithm \citep{BSV21approximate}.
However, both these approaches fail in satisfying \PO{} even when preferences are restricted to the lexicographic domain, as we illustrate in \cref{app:ef1+po:examples}.

\begin{remark}
Other naive approaches 
also fail to achieve the desired outcome. 
For instance, a 
good may have to be assigned to an agent
for which it is not in the highest position.
This observation immediately shows that approaches used for achieving \EFX{} and \PO{} under separable preferences (as described in \cref{alg:efx+po}) or those proposed by \citep{hosseini2022fairly} 
that assign goods to agents having them high in the orderings.
We illustrate this challenge in the next example.
\end{remark}

\begin{example}
\label{ex2:ef1+po}
 Consider a mixed instance with five agents, six items, and a profile as follows. The set of common goods is $\bar{G} = \{o_{1}^{+}\}$, the set of common chores is $\bar{C} = \{o_{2}^{-}, o_{3}^{-}, o_{4}^{-}, o_{5}^{-}, o_{6}^{-}\}$, but the set of common terrible chores is empty.
    \begin{align*}
       1 &: \quad o^-_4 \rhd o^-_5 \rhd o^-_6 \rhd o^+_1 \rhd o^-_2 \rhd o^-_3 \\
       2 &: \quad o^-_2 \rhd o^-_3 \rhd o^+_1 \rhd o^-_4 \rhd o^-_5 \rhd o^-_6 \\
       3 &: \quad o^-_2 \rhd o^-_3 \rhd o^+_1 \rhd o^-_4 \rhd o^-_5 \rhd o^-_6 \\
       4 &: \quad o^-_2 \rhd o^-_3 \rhd o^+_1 \rhd o^-_4 \rhd o^-_5 \rhd o^-_6 \\
       5 &: \quad o^-_2 \rhd o^-_3 \rhd o^+_1 \rhd o^-_4 \rhd o^-_5 \rhd o^-_6 
    \end{align*}
In this instance, every \EF{1} and \PO{} allocation must assign the only good, $o_{1}^{+}$ to agent 1; otherwise, either the allocation violates \PO{} or it violates \EF{1}. Note that all other agents rank $o_{1}^{+}$ higher in their importance ranking; yet, this common good must be allocated to agent 1.

Another unintuitive observation is that if the preferences of agents 3, 4, and 5 were identical to those of agent 1 (instead of 2), then $o_{1}^{+}$ would need to be allocated to agent 2 to guarantee \EF{1} and \PO{}.
In \cref{app:ef1+po:examples}, we give additional examples to illustrate the complexity of this problem.
\end{example}

Given the aforementioned challenges, we show that for lexicographic mixed instances that contain at least $n-1$ common terrible chores, an \EF{1} and \PO{} allocation always exists and can be computed efficiently.
Formally, the set of \emph{common terrible chores} contains all chores that are terrible for all agents, i.e., $\bar{C}^* = \bigcap_{i \in N} C^*_i$.
We describe an algorithm that finds an \EF{1} and \PO{} allocation for every mixed instance with at least $n-1$ common terrible chores, i.e., $|\bar{C}^*| \geq n - 1$.
We present its pseudocode in \cref{app:ef1+po:terrible}.

\paragraph{Algorithm.}
Fix any ordering of agents $1, \ldots, n$. 
We start by giving agent $1$ all items it considers as goods.
To each next agent, in the order $2,\dots,n$,
we give everything it considers as goods
from the set of unassigned items
(or nothing if there are no such items left).
The remaining items are necessarily common chores.
Next, we start from agent $n$,
and assign to it all of its non-terrible chores.
To each next agent, in the reversed order, i.e., $n-1,\dots,1$,
we give all its non-terrible chores
that remain (if any).
The only remaining items are common terrible chores.
Such partial allocation is \PO{}, but it can be very unfair
(agent $1$ got all its goods and
agent $n$ all its non-terrible chores).
To ensure fairness, we assign the remaining
common terrible chores using serial dictatorship
with ordering $\sigma$ such that the last agent, $\sigma_n$,
is not envied by any other agent
(since the partial allocation is \PO{}
there surely is such $\sigma$).
In this way, every agent
(except possibly $\sigma_n$)
receives at least one common terrible chore,
which results in an \EF{1} allocation
(and by \cref{lmm:po} it is still \PO{}).

\begin{algorithm}[t]
	\caption{Finding an \EF{1} and PO allocation when there is at least $n-1$ common terrible chores}
    \label{alg:ef1+po:n-1}
    \begin{algorithmic}[1]
    \REQUIRE A mixed instance $( N, M, G, \rhd )$ s.t. $|\bar{C}^*| \ge n -1$
    \ENSURE An allocation $A$ that is \EF{1} and PO 
        \STATE $A := (\emptyset, \dots, \emptyset), \quad H := \bar{C} \setminus \bar{C}^*$

        $\triangleright$ \textsc{Phase 1:}
        \FOR{ $i \in (1,\dots,n)$}
            \STATE $A_i \leftarrow G_i \setminus \bigcup_{j \in [i-1]} A_j$ 
        \ENDFOR
        \FOR{ $i \in (n,\dots,1)$}
            \STATE $A_i \leftarrow A_i \cup H \setminus C^*_i$
            \STATE $H \leftarrow H \setminus A_i$
        \ENDFOR

        $\triangleright$ \textsc{Phase 2:}
        \STATE take an arbitrary ordering $\sigma$ s.t. no one envies agent $\sigma_n$
        \STATE assign the items in $\bar{C}^*$ by the serial dictatorship with ordering $\sigma$ and quotas $q$,
            where $q_{\sigma_1} = \max(1,|\bar{C}^*|-n+1)$ and $q_{\sigma_i} = 1$, for every $i \in \{2,\dots,n\}$
        \RETURN $A$
    \end{algorithmic}
\end{algorithm}

\begin{example}
    \label{ex:ef1+po}
    Consider a mixed instance with three agents, eight items, and a profile as follows. 
    The set of common chores is $\bar{C} = \{o^-_1, o^-_2, o^-_3, o^-_5\}$. 
    \begin{align*}
       1 &: \quad o^-_1 \rhd o^-_2 \rhd \underline{o^-_3} \rhd \underline{o^+_4} \rhd o^-_5 \rhd \underline{o^+_6} \rhd o^{-}_7 \rhd \underline{o^+_8} \\
       2 &: \quad o^-_1 \rhd \underline{o^-_2} \rhd o^-_3 \rhd o^+_4 \rhd o^-_5 \rhd o^+_6 \rhd \underline{o^{+}_7} \rhd o^+_8 \\ 
       3 &: \quad \underline{o^-_1} \rhd o^-_2 \rhd o^-_3 \rhd o^+_4 \rhd \underline{o^-_5} \rhd o^+_6 \rhd o^{-}_7 \rhd o^+_8
    \end{align*}
         Suppose the ordering is $(1,2,3)$.
         \cref{alg:ef1+po:n-1} starts by assigning $\{o^+_4, o^+_6, o^+_8\}$ and $\{o^+_7\}$ to agents $1$ and $2$, respectively.
         Then in the reverse ordering $(3,2,1)$, agents get their common non-terrible chores (out of the remaining items), resulting in agent $3$ receiving $o^-_5$ (and nothing for others).
         Since agent $3$ is not envied (such an agent always exists), \cref{alg:ef1+po:n-1} allocates all common terrible chores ($\{o^-_1, o^-_2, o^-_3\}$) by running a serial dictatorship with the ordering of $(1,2,3)$ and single quota. The final allocation is underlined. 
%
\end{example}

\begin{restatable}{theorem}{thmefonePOterrible}
\label{thm:ef1POterrible}
    Given a lexicographic mixed instance with at least $n-1$ common terrible chores, an \EF{1} and \PO{} allocation always exists and can be computed in polynomial time.
\end{restatable}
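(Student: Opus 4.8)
The plan is to verify that Algorithm~\ref{alg:ef1+po:n-1} always terminates with a well-defined complete allocation, that this allocation is \PO{}, and that it is \EF{1}. The structure mirrors the informal description: Phase~1 produces a \PO{} partial allocation on $M \setminus \bar{C}^*$, and Phase~2 distributes the common terrible chores $\bar{C}^*$ via serial dictatorship while preserving \PO{} (via \cref{lmm:po}) and achieving \EF{1}.

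First I would argue \emph{well-definedness}. The first loop of Phase~1 assigns to each agent exactly the goods not already taken, so after it runs every non-common-good item is gone and the leftover items are common chores (an item left unassigned was a chore for every agent who had a chance at it, i.e.\ all agents). The second loop removes from $H = \bar{C} \setminus \bar{C}^*$ each agent's non-terrible chores in reverse order; since every item of $H$ is non-terrible for at least one agent, all of $H$ is allocated, leaving exactly $\bar{C}^*$ unallocated. The existence of an ordering $\sigma$ with $\sigma_n$ unenvied in Phase~2 is the key subtlety: I would observe that the partial allocation after Phase~1 is \PO{} (shown next), and in a \PO{} allocation under lexicographic preferences the ``envy relation'' restricted appropriately has a source/sink — more concretely, one can take $\sigma_n$ to be any agent whose bundle is $\succeq$-maximal among all bundles in the current allocation with respect to some agent's ordering, or simply invoke that a \PO{} allocation admits an agent envied by no one (otherwise an envy cycle could be resolved by rotating bundles, contradicting \PO{}; one must check the lexicographic swap strictly improves, which it does since envy is strict). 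This is the step I expect to be the main obstacle and would treat most carefully.

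Next, \emph{Pareto optimality}. For the partial allocation $B$ at the end of Phase~1: each agent received all remaining goods it could, then all remaining non-terrible chores it could, in a fixed order; I would show $B$ is \PO{} by a direct exchange argument — any Pareto-dominating $B'$ would have to give some agent a $\succ_i$-better bundle, forcing either an extra good (impossible, all available goods were taken greedily in order, so giving agent $i$ an extra good $g$ means taking it from an earlier agent $j < i$ for whom $g$ was also a good and who had priority, making $j$ worse off) or removing a non-terrible chore (symmetric argument using the reverse order). Then \cref{lmm:po} applies directly: $\bar{C}^* \subseteq \bar{C}$, and for every agent $i$, the items in $\bar{C}^*$ are terrible, hence $\bar{C}^* \rhd_i G_i \supseteq$ (everything in $i$'s Phase-1 bundle that is a good) and $\bar{C}^* \rhd_i$ (the chores $i$ holds, since terrible chores outrank non-terrible ones); so $\bar{C}^* \rhd_i (G_i \cup B_i)$, and extending $B$ by serial dictatorship on $\bar{C}^*$ preserves \PO{}.

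Finally, \emph{\EF{1}}. Since $|\bar{C}^*| \ge n-1$ and serial dictatorship with quotas $(q_{\sigma_1}, 1, \dots, 1)$ is used, every agent except possibly $\sigma_n$ receives at least one common terrible chore in Phase~2. Take any ordered pair $(i,j)$ with $i$ envying $j$ in the final allocation $A$. If $i \neq \sigma_n$, then $i$ holds some common terrible chore $c \in C^*_i \cap A_i$; I claim $A_i \setminus \{c\} \succeq_i A_j$. Indeed $c$ is $\rhd_i$-more important than every good of $i$ and than every non-terrible chore, and $A_j$ contains no item of $\bar{C}^*$ assigned \emph{to $i$}; comparing $A_i\setminus\{c\}$ with $A_j$, the most important item in their symmetric difference — if it is $c$-adjacent in importance — must be a chore in $A_j$ or a good in $A_i\setminus\{c\}$; more carefully, since $c$ was the single most-painful item pulling $A_i$ down, removing it makes $i$ at least as happy as with $A_j$ (one verifies this by the lexicographic definition, using that $\sigma_n$ is unenvied so $j$'s terrible-chore load is at least that of no one — but the clean statement is that $c \rhd_i (\text{everything in } A_j)$ is false in general, so I'd instead argue via \cref{lmm:po}-style bookkeeping that $i$'s terrible-chore count exceeds $j$'s by at most one, hence removing one terrible chore from $A_i$ suffices). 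If $i = \sigma_n$, then $\sigma_n$ is unenvied after Phase~1 and receives no terrible chore, while $j$ receives at least one; that terrible chore $c' \in C_i \cap A_j$ — wait, $c'$ is common so $c' \in C^*_i$ — satisfies $A_i \succeq_i A_j \setminus \{c'\}$ by a symmetric argument. In both cases the \EF{1} witness is a single terrible chore. The delicate point, which I would spell out, is the comparison of terrible-chore counts across agents, controlled by the single-quota serial dictatorship and the choice of $\sigma_n$; everything else is routine from the lexicographic definitions and \cref{lmm:po}.
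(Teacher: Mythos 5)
Your setup---well-definedness of the algorithm, existence of an unenvied agent $\sigma_n$ deduced from \PO{} of the Phase-1 allocation $B$ (via envy-cycle resolution), the greedy exchange argument for \PO{} of $B$, and the application of \cref{lmm:po} to the Phase-2 serial dictatorship---matches the paper's proof. The gap is in the \EF{1} part, which is where the real work lies. Your proposed bookkeeping, ``$i$'s terrible-chore count exceeds $j$'s by at most one, hence removing one terrible chore from $A_i$ suffices,'' is not a valid argument under lexicographic preferences: preference between bundles is determined by the identity of the single most important item in the symmetric difference, not by counts, so two agents each holding one common terrible chore can still have one envy the other, and removing a chore does not flip the comparison unless you know precisely which items dominate which. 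Relatedly, your witness in the $i=\sigma_n$ case (removing the terrible chore $c'$ from the \emph{envied} bundle $A_j$) is not a legal \EF{1} witness under the paper's definition, which only permits removing a good from the envied bundle or a chore from the envious agent's own bundle; the correct observation is that $\sigma_n$ envies nobody at all, since every other agent holds a common terrible chore that dominates everything in $A_{\sigma_n}=B_{\sigma_n}$.

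The paper's argument instead exploits the serial-dictatorship order item by item. Take $i$ before $j$ in $\sigma$ (with $j$ receiving a chore in Phase 2). Then $i$ never envies $j$: since $i$ picked its least important remaining chores before $j$, the chore $c$ that $j$ receives satisfies $\{c\} \rhd_i A_i$, and being a common terrible chore it also dominates $G_i$, so the most important item of $A_i \triangle A_j$ lies in $A_j$ and is a chore for $i$. In the other direction, if $j$ envies $i$, the witness is $j$'s own single Phase-2 chore $c$: $A_j \setminus \{c\} = B_j$, and since $A_i$ contains some common terrible chore $c'$ with $\{c'\} \rhd_j (B_j \cup G_j)$, one gets $B_j \succ_j A_i$. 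The residual case $j=\sigma_n$ with $|\bar{C}^*|=n-1$ is settled by the choice of $\sigma$: any $i$ envying $\sigma_n$ is cured by dropping its single Phase-2 chore, reducing to $B_i \succeq_i B_{\sigma_n}$, which holds because $\sigma_n$ was chosen unenvied in $B$. You would need to replace the counting heuristic with this item-level analysis for the proof to go through.
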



The formal proof is relegated to \cref{app:ef1+po:terrible}.
Given the theorem above, one may wonder whether a similar approach can be utilized for instances with potentially fewer than $n-1$ common terrible chores. 
In \cref{app:ef1+po:examples}, we show that even extending to $n-2$ (if possible) requires new techniques with a rather complicated analysis to guarantee Pareto optimality.

\section[MMS and Efficiency]{\MMS{} and Efficiency}
\label{sec:mms+po}

Despite the challenges in satisfying \EF{1} and \PO{} for lexicographic mixed instances that contain terrible chores, we show that an \MMS{} and \PO{} allocation always exists and can be computed in polynomial time. 
%
Note that while an \MMS{} allocation can be computed efficiently \citep{hosseini2022fairly}, its computation along with economic efficiency notions such as \PO{} and rank maximality was open even for objective instances.





We build on the characterization of maximin share which is specified by the most important items.
Simply put, an agent's \MMS{} is characterized by its top item: if it is a chore, its \MMS{} is the top item and all its goods; otherwise its \MMS{} is the set of all items that it considers as good without the first $n-1$ goods according to its importance ordering (or the empty set if its importance ordering contains fewer than $n$ goods).

\begin{proposition}
\label{prop:mms_threshold}
\emph{\citep{hosseini2022fairly}}
Given a mixed instance $(N,M,G,\rhd)$, for every agent $i \in N$,
if $\rhd_i(1) \in C_i$, it holds that $\MMS{}_i = \{\rhd_i(1)\} \cup G_i$.
Otherwise, if $\rhd_i(1) \in G_i$, it holds that $\MMS{}_i = \emptyset$, if $|G_i| < n$,
or $\MMS{}_i = G_i \setminus \bigcup_{k \in [n-1]} \{\rhd_i(k,G_i)\}$, if $|G_i| \ge n$.
\end{proposition}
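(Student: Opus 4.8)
The plan is to prove both halves of the characterization by analyzing what an agent can guarantee via an $n$-partition, using the lexicographic structure crucially: under a lexicographic order, the quality of a bundle is determined by the single most important item on which two bundles differ, so when comparing the $n$ bundles of a partition the ``worst'' bundle is the one that loses on the highest-priority disagreement. I would split into the two cases of the statement.

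\textbf{Case $\rhd_i(1) \in C_i$.} First I would show $\MMS{}_i \succeq_i \{\rhd_i(1)\} \cup G_i$ by exhibiting a partition achieving it: put the single most important item $\rhd_i(1)$ together with all of $G_i$ into one bundle $A_1$, and distribute the remaining chores (all the non-terrible-and-terrible chores other than $\rhd_i(1)$) arbitrarily among $A_2,\dots,A_n$ — feasible since $m \ge n$ can be assumed (otherwise pad with empty bundles, which are the best possible chore-only bundles). Every bundle $A_2,\dots,A_n$ consists only of chores, hence is $\preceq_i \emptyset \preceq_i A_1 = \{\rhd_i(1)\}\cup G_i$; and any empty bundle is also $\preceq_i A_1$. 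So the worst bundle in this partition is $A_1$ itself (or an empty bundle, which is still $\preceq_i A_1$), giving $\MMS{}_i \succeq_i \{\rhd_i(1)\}\cup G_i$. For the reverse inequality $\MMS{}_i \preceq_i \{\rhd_i(1)\}\cup G_i$, I would argue that in \emph{any} $n$-partition, the bundle containing $\rhd_i(1)$ — call it $A_\ell$ — satisfies $A_\ell \preceq_i \{\rhd_i(1)\}\cup G_i$: since $\rhd_i(1)$ is the globally most important item and it is a chore, any bundle containing it is dominated (in the lexicographic order) by the bundle one gets by additionally throwing in every good and removing every other chore, i.e.\ by $\{\rhd_i(1)\}\cup G_i$. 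Hence the minimum over the partition is $\preceq_i A_\ell \preceq_i \{\rhd_i(1)\}\cup G_i$, so $\MMS{}_i \preceq_i \{\rhd_i(1)\}\cup G_i$. Combining the two inequalities gives equality.

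\textbf{Case $\rhd_i(1) \in G_i$.} If $|G_i| < n$, then in any $n$-partition at least one bundle contains no good of $G_i$ (pigeonhole on the $|G_i|$ goods among $n$ bundles); such a bundle consists only of chores (for $i$), so it is $\preceq_i \emptyset$, and since $\emptyset$ is itself achievable as the worst bundle of some partition (give all goods to one agent, spread chores so that at least one bundle is empty — possible when $|G_i|<n$, or more carefully put all of $G_i$ in one bundle and split the chores among the remaining $n-1\ge |G_i|$... actually one just needs one empty bundle, which a pigeonhole argument again supplies), we get $\MMS{}_i = \emptyset$. If $|G_i|\ge n$, I would first show $\MMS{}_i \succeq_i G_i \setminus \bigcup_{k\in[n-1]}\{\rhd_i(k,G_i)\}$ via the partition: put the $k$-th most important good $\rhd_i(k,G_i)$ alone into bundle $A_k$ for $k\in[n-1]$, put all remaining goods into $A_n$, and distribute all chores arbitrarily among $A_1,\dots,A_{n-1}$ (never into $A_n$). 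Then $A_n$'s goods are exactly $G_i\setminus\bigcup_{k\in[n-1]}\{\rhd_i(k,G_i)\}$ and $A_n$ has no chores, while each $A_k$ ($k<n$) contains the single good $\rhd_i(k,G_i)$ possibly plus chores. A short lexicographic comparison shows every $A_k$ ($k<n$) is $\succeq_i A_n$: the top item on which $A_k$ and the target set $A_n$ differ is $\rhd_i(k,G_i)$ (it is in $A_k$, and it is more important than every good in $A_n$), and it is a good, so $A_k \succeq_i A_n$; the only subtlety is when $A_k$ also contains chores more important than $\rhd_i(k,G_i)$ — but there are no such chores in our construction if we are careful, OR we note that chores placed in $A_k$ can be required to be less important than $\rhd_i(1)$, hence less important than... — here I need to be slightly careful, and it is cleanest to put \emph{all} chores into a single bundle, say $A_1$, together with $\rhd_i(1,G_i)$; then for $k=1$ the comparison with $A_n$ still turns on $\rhd_i(1,G_i)$, a good more important than everything in $A_n$, so $A_1\succeq_i A_n$. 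Thus $A_n$ is the worst bundle, establishing the $\succeq$ direction. For the reverse, in any $n$-partition, consider the bundle $A_\ell$ containing the least important of the top $n$ goods, $\rhd_i(n,G_i)$; by pigeonhole the top $n-1$ goods $\rhd_i(1,G_i),\dots,\rhd_i(n-1,G_i)$ occupy at most $n-1$ bundles, so some bundle $A_\ell$ contains none of them, meaning every good in $A_\ell$ is among $\{\rhd_i(n,G_i),\rhd_i(n+1,G_i),\dots\}$, i.e.\ $A_\ell \cap G_i \subseteq G_i\setminus\bigcup_{k\in[n-1]}\{\rhd_i(k,G_i)\}$, and $A_\ell$ may also contain chores. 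Hence $A_\ell \preceq_i G_i\setminus\bigcup_{k\in[n-1]}\{\rhd_i(k,G_i)\}$ (removing chores and adding back no goods only helps), so $\MMS{}_i \preceq_i$ that set. Equality follows.

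\textbf{Main obstacle.} The delicate point throughout is the lexicographic bundle comparison when a bundle mixes a high-priority good with lower-priority chores (or vice versa): one must verify that the most-important differing item is the one driving the comparison, which requires placing chores carefully in the witness partitions so that no chore outranks the relevant good. I expect this bookkeeping — choosing where to dump the ``irrelevant'' chores so the intended bundle is provably the minimum — to be the only real work; everything else is pigeonhole plus the definition of $\succ_i$. Since this is a restatement of a known result (\citep{hosseini2022fairly}), I would also simply cite that source and present the argument above as a self-contained justification.
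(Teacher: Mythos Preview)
The paper does not prove this proposition; it is restated from \citet{hosseini2022fairly} without proof, so there is no in-paper argument to compare against. Your overall strategy---exhibit a witness partition for the lower bound and use a pigeonhole/``the bundle containing the critical item'' argument for the upper bound---is the standard one and is correct in outline.

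However, there is a genuine sign error in your Case~1 lower-bound argument. You write that the chore-only bundles $A_2,\dots,A_n$ satisfy $A_k \preceq_i \emptyset \preceq_i A_1$ and then conclude that $A_1$ is the worst bundle. But that chain would make $A_1$ the \emph{best} bundle, not the worst; and the second inequality is in fact false: since $\rhd_i(1) \in A_1 \cap C_i$ is the globally most important item, comparing $\emptyset$ with $A_1$ turns on a chore lying in $A_1$, so $\emptyset \succ_i A_1$. The correct reasoning runs in the opposite direction: any bundle \emph{not} containing $\rhd_i(1)$ beats any bundle that does (the top differing item is $\rhd_i(1)$, a chore on the side that has it), so each $A_k$ with $k\ge 2$ satisfies $A_k \succ_i A_1$, and $A_1$ is indeed the minimum. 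Your own upper-bound paragraph for this same case uses exactly this logic correctly, so the repair is just to flip the inequalities in the witness argument.

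A smaller wobble: in the $|G_i|<n$ subcase your construction for achieving $\emptyset$ as the minimum is muddled---``spreading chores'' among bundles without goods produces bundles strictly worse than $\emptyset$, and pigeonhole gives you a good-free bundle, not an empty one. The clean fix is to put \emph{all} of $M$ into a single bundle (whose top item is the good $\rhd_i(1)$, hence the bundle is $\succ_i \emptyset$) and leave the remaining $n-1$ bundles empty; then the minimum is exactly $\emptyset$.
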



\paragraph{Algorithm.}
Fix any ordering of agents $1,\dots, n$.
Similar to \cref{alg:ef1+po:n-1},
we start by allocating to 
each agent, in the ordering $1,\dots,n$,
all remaining unassigned items
that it considers as goods
(or nothing if there are no such items left).
The remaining items will be common chores.
We give all of them to agent $n$,
with the exception of
the most important item for $n$,
which we denote by $c^*$.
Now, the choice of which agent should receive $c^*$
depends on whether $c^*$ is the
most important item for all agents.
If it is the case, we give it to agent $1$.
Otherwise,
i.e., if there is at least one agent
for which there is more important item than $c^*$,
we give it to the last such agent
in the ordering.
\begin{example}
    \label{ex:mms+po}
    We revisit the instance given in Example~\ref{ex:ef1+po}. 
    \begin{align*}
       1 &: \quad \underline{o^-_1} \rhd o^-_2 \rhd o^-_3 \rhd \underline{o^+_4} \rhd o^-_5 \rhd \underline{o^+_6} \rhd o^{-}_7 \rhd \underline{o^+_8} \\
       2 &: \quad o^-_1 \rhd o^-_2 \rhd o^-_3 \rhd o^+_4 \rhd o^-_5 \rhd o^+_6 \rhd \underline{o^{+}_7} \rhd o^+_8 \\ 
       3 &: \quad o^-_1 \rhd \underline{o^-_2} \rhd \underline{o^-_3} \rhd o^+_4 \rhd \underline{o^-_5} \rhd o^+_6 \rhd o^{-}_7 \rhd o^+_8
    \end{align*}
    For this instance, the allocation returned by Algorithm~\ref{alg:ef1+po:n-1} is not \MMS{}. The outcome for agent $3$ was $\{o^-_1, o^-_5\}$, to which agent $3$ strictly prefers its \MMS{} $(\MMS{}_3 = \{o^-_1, o^+_4, o^+_6, o^+_8\}$).
    Suppose the ordering is $(1,2,3)$. Algorithm~\ref{alg:mms+po}  starts by assigning $\{o^+_4, o^+_6, o^+_8\}$ and $\{o^+_7\}$ to agents $1$ and $2$, respectively.
    Then, agent $3$ receives all common chores, except its most important item $c^*=o^-_1$. Lastly, since $c^*$ is the most important item for every agent, it is allocated to the first agent. The final allocation is underlined. 

    
\end{example}

Let us prove the correctness of our algorithm
(the full proof is relegated to \cref{app:mms+po}).
\begin{algorithm}[t]\small
    \caption{Finding an \MMS{} and PO allocation}
    \label{alg:mms+po}
    \begin{algorithmic}[1]
        \REQUIRE A mixed instance $( N, M, G, \rhd)$ with terrible chores
        \ENSURE An allocation $A$ that is \MMS{} and PO
        \STATE $A := (\emptyset, \dots , \emptyset), \quad c^* := \rhd_{n}(1)$
        \FOR{$i \in (1,\dots,n)$}
            \STATE $A_i \leftarrow G_i \setminus \bigcup_{j \in [i-1]} A_j$
        \ENDFOR
        \STATE $A_n \leftarrow A_n \cup \bar{C} \setminus \{c^*\}$
        \IF{for every $i \in N$ it holds that $\rhd_i(1) = c^*$}
            \STATE $A_1 \leftarrow A_1 \cup \{c^*\}$
        \ELSIF{$c^* \in \bar{C}$}
            \STATE $i^* := \max \{i \in [n]: \rhd_i(1) \neq c^* \}$
            \STATE $A_{i^*} \leftarrow A_{i^*} \cup \{c^*\}$
        \ENDIF
        \RETURN $A$
    \end{algorithmic}
\end{algorithm}

\begin{restatable}{theorem}{thrmmmspo}
\label{thrm:mms+po}
    Given a lexicographic mixed instance with terrible chores, an \MMS{} and \PO{} allocation always exists and can be computed in polynomial time.
\end{restatable}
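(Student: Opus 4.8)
The plan is to verify separately that the allocation $A$ produced by Algorithm~\ref{alg:mms+po} is (i) Pareto optimal and (ii) satisfies \MMS{} for every agent, and that it runs in polynomial time (which is immediate since each of the $O(n)$ loop iterations does $O(m)$ work). For Pareto optimality, I would first argue that the partial allocation obtained after the $\textsc{for}$ loop (before handing out the common chores) is \PO{}: each good is assigned to the first agent, in the fixed order, who still wants it, and lexicographic preferences let me run the standard exchange argument---if some allocation Pareto-dominated it, take the most-important item assigned differently and derive a contradiction with the greedy choice. Then the assignment of all common chores (those remaining after the goods phase) to agent $n$ except $c^*$, together with the single item $c^*$ going to agent $1$ or to $i^*$, should be shown to preserve \PO{}; here I expect to invoke \cref{lmm:po} with $H = \bar{C}$ (or the appropriate sub-multiset of common chores), after checking its hypothesis $H \rhd_i (G_i \cup B_i)$ holds for every agent---this is exactly where ``terrible chores'' is used, since every common chore left after the goods phase is more important than every good for every agent, because the instance has terrible chores. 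The distribution of $\bar C$ can be viewed as a serial dictatorship (agent $n$ taking all but $c^*$, then the designated recipient taking $c^*$, with a suitable ordering and quotas), so \cref{lmm:po} applies directly.

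For the \MMS{} guarantee I would use \cref{prop:mms_threshold} to pin down each agent's threshold and check $A_i \succeq_i \MMS{}_i$ case by case. For agents $i$ whose top item is a good ($\rhd_i(1)\in G_i$), the threshold is either $\emptyset$ or $G_i$ minus its $n-1$ most important goods; agent $i$ receives $A_i = G_i \setminus \bigcup_{j<i}A_j$, and I must argue this bundle (possibly together with $c^*$ if $i=i^*$ or $i=1$, which only helps or is a chore that I must handle) is at least as good as that threshold---the key point being that the at most $i-1 \le n-1$ goods that earlier agents ``stole'' from $i$ are among $i$'s most important goods, matching the $n-1$ goods removed in the threshold, while no common chore worse than $c^*$ is ever given to such an $i$ except possibly $c^*$ itself, and even then the resulting bundle beats $\MMS{}_i$. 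For agents whose top item is a chore, the threshold is $\{\rhd_i(1)\}\cup G_i$; the delicate agent is the one who ends up with all of $\bar C \setminus \{c^*\}$ (agent $n$) and the one who gets $c^*$: I must show agent $n$'s bundle, although it contains many common chores, still dominates $\{\rhd_n(1)\}\cup G_n$ because $\rhd_n(1)=c^*$ is precisely the chore \emph{not} given to agent $n$, so on the symmetric difference the most important differing item is the good $c^*$... wait, $c^*$ is a chore for $n$, so $A_n \cup \{c^*\} = \{\rhd_n(1)\}\cup(\text{other common chores})\cup(\text{$n$'s goods})$ versus $\MMS{}_n$; removing $c^*$ from $n$'s bundle is exactly what makes $A_n \succ_n \MMS{}_n$. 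For the recipient of $c^*$ (agent $1$ in the all-top-equal case, or $i^*$ otherwise), I check that adding the single chore $c^*$---which is that agent's top item in the all-equal case, or not its top item in the other case---still leaves a bundle dominating its threshold.

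The main obstacle I anticipate is the \MMS{} verification for the agent receiving $c^*$ when $c^*$ is that agent's most important item (the ``all agents agree on the top'' branch), combined with making sure that such an agent still has all its goods: I need that when $c^*$ is common and top for everyone, agent $1$'s bundle $A_1 = G_1$ together with the single chore $c^*$ satisfies $A_1 \cup\{c^*\} \succeq_1 \{\rhd_1(1)\}\cup G_1 = \{c^*\}\cup G_1$, which actually holds with equality, so this case is clean---but I must double-check that $G_1$ really is intact, i.e., that in the goods phase agent $1$ (being first) indeed received all its goods and no later reallocation removed any, and that the ``otherwise'' branch correctly never strands a good. A second subtlety is confirming that every remaining item after the goods loop is a common chore (so that Phase-2's reasoning and \cref{lmm:po}'s hypothesis are valid): an item not taken by any agent in the loop is not a good for any agent, hence a common chore. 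Once these structural facts are nailed down, the case analysis is routine.
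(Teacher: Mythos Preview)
Your \MMS{} argument is essentially fine (though overcomplicated: in an instance \emph{with terrible chores} we have $\rhd_i(1)\in C_i$ for \emph{every} agent, so the ``top item is a good'' case is vacuous and \cref{prop:mms_threshold} gives $\MMS{}_i=\{\rhd_i(1)\}\cup G_i$ uniformly). The paper's \MMS{} verification is the same one-liner you sketch for agent~$n$: the only agent who can possibly receive its own top item is agent~$1$, and that agent also receives all of $G_1$, so equality with $\MMS{}_1$ holds; every other agent avoids its top chore and hence strictly beats its threshold.

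The genuine gap is in your \PO{} argument. You plan to invoke \cref{lmm:po} with $H=\bar{C}$ and claim that ``every common chore left after the goods phase is more important than every good for every agent, because the instance has terrible chores.'' This is false. ``Terrible chores'' only means each agent's single \emph{top} item is a chore; it does \emph{not} force all common chores above all goods---that is precisely the extra structure of \emph{separable} preferences, which is not assumed here. Concretely, in \cref{ex:ef1+po} agent~$1$ has $o_4^+ \rhd_1 o_5^-$ with $o_5\in\bar{C}$ and $o_4\in G_1$, so $\bar{C}\rhd_1 G_1$ fails and the hypothesis $H\rhd_i(G_i\cup B_i)$ of \cref{lmm:po} is violated. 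Hence the lemma cannot be applied, and your \PO{} plan collapses.

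The paper does not use \cref{lmm:po} at all for this theorem. Instead it argues \PO{} directly: assume some $A'$ Pareto-dominates $A$, take the smallest index $i$ with $A'_i\neq A_i$, and derive a contradiction by case analysis on whether $i$ lies before, at, or after the recipient $i^*$ of $c^*$. The key step is that if $i<i^*$ (or $i^*<i<n$) then $A_i=G_i\setminus\bigcup_{j<i}A_j$ is already the best bundle $i$ can get given that earlier agents' bundles are unchanged; and if $i=i^*$ then $c^*$ cannot be offloaded to any $j>i^*$ because $c^*=\rhd_j(1)$ for every such $j$ by the definition of $i^*$. You should replace the \cref{lmm:po} step with this direct argument.
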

\begin{proof}[Proof (sketch)]
    Since $(N,M,G,\rhd)$ is an instance with terrible chores,
    by \cref{prop:mms_threshold},
    maximin share of every agent
    consists of its most important chore
    and all goods.
    The first agent, agent $1$, is the only one that can receive its most important chore
    in our algorithm.
    However, since apart from that it receives all of its goods,
    the output allocation is \MMS{}.

    
    For \PO{}, consider two agents $i < j \in [n]$.
    Observe that $j$ does not have any item
    that $i$ considers as good.
    Hence, the only Pareto improvement between those two agents
    is possible if $i$ received $c^*$
    (Pareto improvement can involve more than two agents,
    but we do not consider such in this sketch).
    Then, $i$ can potentially offer $c^*$ to $j$,
    bundled with less important for $i$ goods.
    However, if $i$ was assigned $c^*$,
    this means that $c^*$ is the most important item for $j$.
    Hence, $j$ would not accept any
    exchange in result of which it gets $c^*$.
\end{proof}

Combining Theorem~\ref{thrm:mms+po} with the existence and computation results when there are no terrible chores~\citep{hosseini2022fairly}, we obtain the following general conclusion. 

\begin{corollary}
    \label{cor:mms+po}
    Given a lexicographic mixed instance, an \MMS{} and \PO{} allocation always exists and can be computed in polynomial time.
\end{corollary}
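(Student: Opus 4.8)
The plan is to establish the corollary by a dichotomy on whether the instance contains terrible chores, invoking \cref{thrm:mms+po} in one branch and the known no-terrible-chores result in the other. The first step is to observe that, by definition, a mixed instance either has terrible chores---meaning $C^*_i \neq \emptyset$, equivalently $\rhd_i(1) \in C_i$, for every agent $i \in N$---or it does not, in which case there is at least one agent $i$ with $\rhd_i(1) \in G_i$. This alternative can be decided in linear time simply by inspecting the top item of each agent's importance ordering, which is precisely what lets us select the appropriate algorithm efficiently and hence keep the overall running time polynomial.

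In the terrible-chores branch, I would invoke \cref{thrm:mms+po} directly: it guarantees that an \MMS{} and \PO{} allocation exists and is computed in polynomial time by \cref{alg:mms+po}. In the remaining branch, at least one agent's most important item is a good, and here I would appeal to the result of \citet{hosseini2022fairly}, recalled in the preliminaries, that an \EFX{} and \PO{} allocation always exists and can be found in polynomial time. The observation that closes this branch is that \EFX{} implies \MMS{} under lexicographic preferences (as noted earlier in the paper), so the allocation returned by that procedure is in particular \MMS{} and \PO{}.

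Combining the two branches yields the corollary, since every lexicographic mixed instance falls into exactly one of them and both admit a polynomial-time \MMS{}+\PO{} procedure. I do not expect any genuine obstacle here: the statement is a repackaging of \cref{thrm:mms+po} with the external no-terrible-chores guarantee, and all the substantive work is already contained in \cref{thrm:mms+po} itself. The only point requiring a brief justification is the \EFX{} $\Rightarrow$ \MMS{} implication used in the second branch, which is immediate from the relationships between the fairness notions established in the preliminaries and so needs no separate argument.
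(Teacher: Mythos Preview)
Your proposal is correct and matches the paper's own justification: the paper states the corollary as an immediate combination of \cref{thrm:mms+po} (terrible-chores case) with the \EFX{}+\PO{} result of \citet{hosseini2022fairly} (no-terrible-chores case), using that \EFX{} implies \MMS{}. Your write-up simply makes the two-branch argument and the \EFX{} $\Rightarrow$ \MMS{} step explicit.
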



Corollary~\ref{cor:mms+po} ensures that an \MMS{} and PO allocation always exists.
From Corollary~\ref{cor:efx+po:hardness} we know however
that if we strengthen \MMS{} to \EFX{},
then an \EFX{} and \PO{} allocation may not exist and deciding if such an allocation exists is computationally hard.
A natural question is whether one can strengthen the efficiency to rank maximality. We show that deciding whether there exists an \MMS{} and \RM{} allocation is computationally hard, which stands in sharp contrast to the goods-only and chores-only settings.


\begin{restatable}{theorem}{thrmmmsrm}
    \label{thrm:mms+rm}
Deciding whether there exists an \MMS{} and \RM{} allocation
for a given lexicographic mixed instance is \NPC{}.
\end{restatable}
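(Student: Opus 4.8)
\textbf{Proof proposal for \cref{thrm:mms+rm}.}

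The plan is to show membership in \NP{} and then \NP{}-hardness by a reduction, likely again from \textsc{Exact Cover by 3-Sets} (\textsc{X3C}) or a similar packing/covering problem, exploiting the rigidity that rank maximality imposes. For membership, I would first argue that \RM{} can be verified in polynomial time: rank maximality is defined coordinate-wise on items (item $o$ contributes its rank-profile entry to a signature vector counting how many items are given to an agent that ranks them $1$st, $2$nd, etc.), so given a candidate allocation one computes its rank signature and compares it lexicographically against the optimum, which itself is found by a single max-weight bipartite matching computation. Checking \MMS{} is immediate via \cref{prop:mms_threshold} (each agent's \MMS{} bundle is explicitly characterized). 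Hence the certificate is just the allocation itself, giving membership in \NP{}.

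For hardness, the key observation is that \RM{} removes most of the slack that made \MMS{}+\PO{} easy: in an instance with terrible chores, \cref{prop:mms_threshold} says each agent's \MMS{} bundle is ``its top chore plus all its goods,'' so to be \MMS{} every agent that is \emph{not} assigned its single most important chore must instead receive \emph{all} of its goods. Meanwhile \RM{} forces each item to go to an agent valuing it maximally (in the appropriate good/chore sense). I would set up gadgets so that the common terrible chores must be spread across agents (for \MMS{}, at most one agent can avoid receiving a terrible chore), but \RM{} constrains which agent may receive which chore, and the goods act as ``capacity'' markers: an agent forced to take a terrible chore can still be \MMS{} only if it is the designated one, i.e. it also gets all its goods. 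Encoding an \textsc{X3C} instance, I would associate goods or chore-blocks with set elements and with sets $S_j$, arranging importance orderings so that a rank-maximal matching of items-to-agents corresponds to selecting a subfamily of sets, and the \MMS{} constraint is satisfiable exactly when the selected sets form an exact cover. The reduction should be polynomial and, since \RM{}-verification is polynomial, the whole problem lands in \NPC{}.

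The main obstacle I expect is the delicate interaction between the two constraints: \RM{} is a \emph{global lexicographic} optimum over rank signatures, so it is not enough to reason locally item by item — I must pin down exactly what the rank-maximal signature of the constructed instance is (independent of the \textsc{X3C} answer), and then show that among all rank-maximal allocations, an \MMS{} one exists iff a cover exists. This requires carefully choosing the multiplicities of ``dummy'' items so the optimum rank signature is forced, while still leaving a binary choice per set $S_j$. A secondary subtlety is ensuring the instance genuinely has terrible chores for every agent (so that \cref{prop:mms_threshold}'s chore-case characterization applies uniformly), and that the contrast with goods-only/chores-only — where \MMS{}+\RM{} is presumably tractable — is real, i.e. that mixedness is essential to the hardness gadget.
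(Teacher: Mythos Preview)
Your proposal has a genuine gap in the \MMS{} analysis, and it also diverges from the paper's approach in a way that makes your route harder than necessary.

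First, your reading of \cref{prop:mms_threshold} in the terrible-chores case is inverted. If $\rhd_i(1)\in C_i$ then $\MMS_i=\{\rhd_i(1)\}\cup G_i$, so any agent that does \emph{not} receive its top chore automatically satisfies $A_i\succ_i\MMS_i$ (the top item of the symmetric difference is a chore in $\MMS_i$). The binding constraint is on an agent that \emph{does} receive its top chore: such an agent must then receive exactly all its goods and no further chore. Your sentence ``every agent that is not assigned its single most important chore must instead receive all of its goods'' is the wrong direction, and the follow-up ``for \MMS{}, at most one agent can avoid receiving a terrible chore'' does not follow from \MMS{} at all. With this misunderstanding the gadget you sketch cannot be made to work as described.

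Second, the paper's \RM{} notion (see the formal definition in the appendix) is purely item-wise: each common chore must go to an agent with the largest position number for it, and each non-common-chore item to an agent (among those seeing it as a good) with the smallest position number. There is no global lexicographic rank-signature to pin down and no bipartite matching needed; verification is a linear scan. Your main anticipated obstacle therefore does not arise.

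Third, and most importantly, the paper does \emph{not} build an instance with terrible chores. It places a common good $g_x$ (respectively $f_1$) at the top of every agent's ordering, so that $|G_i|<|N|$ forces $\MMS_i=\emptyset$ for all $i$. Then \MMS{} reduces to the clean condition ``every nonempty bundle has a good as its most important item.'' The reduction is from \textsc{Set Cover}: agents $1,\dots,n$ correspond to subsets, $m$ dummy agents pad the count, and two extra agents (a ``setter'' and a ``filler'') together with the special items $g_x,c_x$ and filler goods $f_1,\dots,f_m$ are positioned so that \RM{} forces $c_x\to x$, $f_i\to f$, $g_1,\dots,g_k$ to subset/dummy agents, and each element-chore $c_i$ to some agent $j$ with $u_i\in S_j$. \MMS{} then says any agent receiving an element-chore must also hold some $g_\ell$; since there are only $k$ such goods, this is possible iff a cover of size $\le k$ exists. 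The item-wise \RM{} definition and the $\MMS_i=\emptyset$ trick are exactly what let the argument stay local and avoid the difficulties you flagged.
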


The proof of the theorem (relegated to Appendix~\ref{app:mms+rm})
is a reduction from \textsc{Set Cover} problem and
shares some similarities with the proof of Theorem~\ref{thrm:efx:hardness}
(instance is objective and chores correspond to elements of the universe,
agents to subsets, and assignment of goods to subsets chosen to the cover).
However, there are some important differences.
First, much more emphasis is put on the positions of items in the importance orderings.
To this end, we introduce additional dummy goods and chores and two agents that
allow us to restrict the set of possible rank maximal allocations.

\section{Concluding Remarks}
By focusing on the restricted domain of lexicographic preferences, we identified instances with terrible chores for which \EFX{} is hard to compute, thus, providing the first ever computational hardness result for \EFX{}.
Nonetheless, we identified a natural class of separable lexicographic preferences for which \EFX{} and \PO{} allocations are efficiently computable (and always exist).
Moreover, we showed that \MMS{} and \PO{} allocations always exist and can be computed efficiently for any lexicographic mixed instance.

%

For \EF{1} and \PO{}, the main remaining challenge is how to deal with (possibly subjective) mixed instances that contain fewer than $n-1$ common terrible chores. 
%
Steps towards addressing this problem could potentially lead to novel techniques for more general preferences, including and beyond, the additive domain. 







\section*{Acknowledgments}
Hadi Hosseini acknowledges support from NSF grants \#2144413 (CAREER), \#2052488, and \linebreak \#2107173.
We thank the anonymous reviewers for their helpful comments.


\bibliographystyle{plainnat}
\bibliography{arxiv-23/mixedref,arxiv-23/references}


\appendix
\section*{Appendix}

\section[Omitted Material from Section 3]{Omitted Material from \cref{sec:EFX}}

\subsection[Proof of Theorem 1]{Proof of \cref{thrm:efx:hardness}}
\label{app:efx:hardness}

\thmefxhradness*

\begin{proof}
Since we can check whether given allocation is \EFX{} in polynomial time, we know that the problem is in NP.
Therefore, let us focus on proving the hardness.

To this end, we will make a reduction from \textsc{Exact Cover by 3-Sets} (\textsc{X3C}).
In an \textsc{X3C} instance, we are given a universe of $m=3k$ elements,
$\mathcal{U} = \{u_1,\dots,u_{m}\}$ and
a family of its three-element-subsets,
$\mathcal{S} = \{S_1,\dots,S_n\}$.
We will assume that each element of the universe
is contained in at least one of the subsets and
that each subset is distinct, i.e., $S_j \neq S_{j'}$ if $j \neq j'$.
The problem is to decide whether there exists a subfamily
$\mathcal{K} \subset \mathcal{S}$ of $k$ subsets
that is a set cover for $\mathcal{U}$, i.e.,
$\bigcup_{S_j \in \mathcal{K}} S_j = \mathcal{U}$.
Observe that since $m=3k$,
each element appears in exactly one subset in $\mathcal{K}$,
hence $\mathcal{K}$ is necessarily also an exact cover.
This problem is known to be NP-complete.

Now, for every \textsc{X3C} instance,
let us define a corresponding lexicographic mixed instance.
Specifically, let us take $2n$ agents $N= \{1,2,\dots,2n\}$.
Also, for every element $u_i \in \mathcal{U}$,
let us take $2n$ common chores $c_{i,1},c_{i,2},\dots,c_{i,2n}$.
This will give us a total of $2mn$ chores $C=\{c_{1,1},c_{1,2},\dots,c_{m,2n}\}$.
Let us add to it $k$ common goods $G = \{g_1,g_2,\dots,g_k\}$,
assignment of which will correspond to
the choice of $k$ subsets in the original instance.
This gives as a total of $k(6n + 1)$ items $M=C \cup G$.
Next, let us specify the importance ordering of each agent.
To this end, first, 
for every $j \in [n]$ by
$C^j = \{c_{i,l} \in C : u_i \in S_j\}$ 
let us denote the set of all chores corresponding
to the elements of subset $S_j$.
Now, for every $j \in [n]$,
we set the importance ordering of agent $2j$
such that the most important items for it are chores not in $C^j$,
followed by all goods, $G$,
and ending with chores in $C^j$, i.e.,
\[
    ( C \setminus C^j) \rhd_{2j} G \rhd_{2j} C^j.
\]
Moreover, we specify that the importance ordering of goods as
$g_1 \rhd_{2j} g_2 \rhd_{2j} \dots \rhd_{2j} g_k$.
The importance ordering of chores in $C \setminus C^j$ and $C^j$
can be arbitrary.
Finally, we set the importance ordering of agent $2j-1$
as identical to this of agent $2j$,
i.e., $\rhd_{2j - 1} = \rhd_{2j}$.
%
In the remainder of the proof we will show that
a set cover in the original instance exists,
if and only if, there exists an \EFX{} allocation in
the corresponding lexicographic mixed instance.

First, let us assume that there is a set cover $\mathcal{K} \subseteq \mathcal{S}$
in the original instance.
Let us show that this implies that there exists an \EFX{}
allocation in the corresponding lexicographic mixed instance
(on a side, we will also show that it is PO).
Without loss of generality,
let us assume that $\mathcal{K} = \{S_1,\dots,S_k\}$.
Then,
consider allocation $A=(A_1,\dots,A_{2n})$
such that for every $j \in [n]$
we have $A_{2j-1} = \emptyset$
and
\[
    A_{2j} = \begin{cases}
        \{g_j\} \cup C^j, & \mbox{if } j \le k,\\
        \emptyset, & \mbox{otherwise.}
    \end{cases}
\]
Since $\mathcal{K}$ covers all elements of $\mathcal{U}$ exactly once,
all items belong to some bundle in $A$,
which means it is a well-defined allocation.

Let us show that $A$ is also an \EFX{} allocation.
Consider an agent that received a non-empty bundle,
i.e., $2j$ for arbitrary $j \in [k]$.
Since its most important received item is a good,
it does not envy any agent whose bundle is an empty set.
Next, for every $j' \in [k] \setminus \{j\}$,
we know that $S_{j'} \neq  S_j$,
so also $C^{j'} \neq  C^j$.
Thus, there is a chore $c \in C^{j'}$ such that $c \not \in C^j$,
which means that $c \in A_{2j'}$ and $c$ is more important
for agent $2j$ than good $g_{j'}$.
Hence, for agent $2j$ the most important item in bundle $A_{2j'}$ is a chore.
Thus, $2j$ does not envy $2j'$.
As a result, we get that for every $j \in [k]$ agent $2j$
does not envy any other agent.
It remains to consider agents whose bundles are empty sets.
Observe that each such agent
can envy only the agents that received some items,
i.e., agents $2,4,\dots,2k$.
However, each of these agents received only one good,
after removal of which there is no more envy.
Therefore, $A$ is indeed \EFX{}.

On a side, let us observe that $A$ is also PO.
To prove that let us consider allocation $A'=(A'_1,\dots, A'_{2n})$
such that for each agent $j \in [2n]$ bundle $A'_j$
is at least as good as its original bundle $A_j$, i.e.,
\begin{equation}
\label{eq:thrm:efx:hardness:po}
\tag{$*$}
A'_j \succeq_i A_j, \quad \mbox{for every } j \in N.
\end{equation}
We will prove that this implies that $A'=A$.
First, let us show that for every $j \in [k]$,
it holds that $A'_{2j} \cap G = A_{2j} \cap G$,
i.e., that in $A'$ every $g_j$ still belongs to agent $2j$.
Assume otherwise and take the smallest $j$
such that $g_j \not \in A'_{2j}$.
Then, for every $i < j$ we have $g_i \in A'_{2i}$,
hence $\{g_1,\dots,g_j\} \cap A'_{2j} = \emptyset$.
Since, $g_j$ was the most important item in $A_{2j}$,
agent $2j$ prefers it to all bundles that does not have any
good from $\{g_1,\dots,g_j\}$.
Hence, $A_{2j} \succ_{2j} A'_{2j}$, which contradicts~\eqref{eq:thrm:efx:hardness:po}.
Thus, indeed $A'_{2j} \cap G = A_{2j} \cap G$.
%
%
Since in $A'$ all goods stay with agents that had them in $A$,
we know that $A'_i = \emptyset$ for every $i \in [2n]$ such that $A_i = \emptyset$
(otherwise, this would mean that $A'_i$ contains only chores,
which would contradict~\eqref{eq:thrm:efx:hardness:po}).
Finally, observe that every chore $c \in C$
cannot be allocated to agent $2j$
if $c \not \in C^j$.
Otherwise, we would have $A_{2j} \succ_{2j} A'_{2j}$,
which would contradict~\eqref{eq:thrm:efx:hardness:po}.
Hence, $A'$ is an allocation in which
$g_j \in A'_{2j}$, for every $j \in [k]$,
and  $c \in A'_{2j}$ for every $j \in [k]$ and $c \in C^j$.
This means that $A'=A$.
Thus, $A$ is PO.

Finally, let us show that if there is
no exact set cover $\mathcal{K}$ in the original instance,
then there does not exist an \EFX{} allocation.
By contradiction, let us assume otherwise, i.e.,
there is no exact set cover $\mathcal{K}$ in the original instance,
but there is an \EFX{} allocation $A=(A_1,A_2,\dots,A_{2n})$.
We begin by showing that there exists an agent that does not receive any chore in $A$.

\begin{claim}
\label{claim:efx:hardness:nochores}
If $A=(A_1,A_2,\dots,A_{2n})$ is an \EFX{} allocation,
then there exists $j \in [2n]$ such that $A_j \cap C = \emptyset$.
\end{claim}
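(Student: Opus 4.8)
The plan is to prove the contrapositive-style statement by contradiction: assume $A$ is \EFX{} but \emph{every} agent receives at least one chore. Note this is a purely structural fact about \EFX{} in the constructed instance (it does not use anything about set covers), so the whole argument lives inside the definition of \EFX{}. Since the instance has only the $k$ common goods $g_1,\dots,g_k$, at most $k$ of the $2n$ agents hold a good; as an exact cover would need $k$ of the $n$ subsets we have $n \ge k$, hence at least $2n-k > k \ge 1$ agents hold bundles consisting entirely of chores. Recall also that for an agent of type $j$ (agents $2j-1,2j$, which have $\rhd_{2j-1}=\rhd_{2j}$) the chores in $C^j$ sit \emph{below} all goods, while the chores in $C\setminus C^j$ are terrible, i.e.\ lie above every good.

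The first and main lemma is: \emph{in such an allocation no agent's bundle is topped, in its own importance order, by a good}. Suppose some agent $i$ had $\rhd_i(1,A_i)\in G_i$; then all chores in $A_i$ are ``nice'' for $i$ (in $C^{j(i)}$), and $A_i \succ_i \emptyset$. Its identical-preference partner $i'$ holds at least one chore by assumption, so $A_{i'}$ contains only chores (it has no good, since $g_l \rhd_{i} (\text{nice chores})$ and $A_i$ already swallowed the goods it sees on top — more carefully, if $i'$ held a good it would itself be topped by a terrible chore, which one then pushes down into the same case). One checks $A_i \succ_{i'} A_{i'}$, so $i'$ envies $i$; \EFX{} then demands $A_{i'}\setminus\{c\}\succeq_{i'} A_i$ for every chore $c\in A_{i'}$. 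But taking $c$ to be the $\rhd_{i'}$-least important chore of $A_{i'}$, the bundle $A_{i'}\setminus\{c\}$ still contains only chores (or is empty), while $A_i$ is topped by a good $g^*$ that $\rhd_{i'}$-dominates the entire symmetric difference; hence $A_i \succ_{i'} A_{i'}\setminus\{c\}$, contradicting \EFX{}. So indeed no bundle is topped by a good.

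Since the $k$ goods are nevertheless allocated, every agent holding a good must also hold a chore terrible for it. The remaining step is to derive a contradiction from this together with \EFX{}, and I expect this to be the main obstacle. Here I would again exploit the paired identical preferences $\rhd_{2j-1}=\rhd_{2j}$ and the ``two terrible chores are fatal'' phenomenon: if an agent holds two or more terrible chores and envies an agent whose bundle it ranks above its own, deleting the least important terrible chore leaves a bundle still worse than $\emptyset$, so \EFX{} constrains the envied bundle severely; chasing these constraints through the pair that contains the most important good $g_1$, and using that a chore $c_{i,\ell}$ can be nice only for agents whose subset contains $u_i$ (so terrible chores cannot be freely reshuffled), one forces the bundles of the relevant agents to collapse — each to a singleton or to a good-plus-one-terrible-chore — which, across all $2n$ agents, cannot absorb all $2mn$ chores while keeping every bundle non-empty and chore-containing, or else violates \EFX{} outright in each surviving subcase. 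The ``topped-by-a-good'' lemma and the counting of goods are comparatively routine; the delicate part is the bookkeeping separating terrible chores $C\setminus C^j$ from nice chores $C^j$ together with the pairwise structure.
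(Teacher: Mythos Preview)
Your proposal has two genuine gaps.

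\textbf{The ``main lemma'' proof is flawed.} You assert that the partner $i'$ ``contains only chores,'' then hedge with ``if $i'$ held a good it would itself be topped by a terrible chore.'' Neither is right: $i'$ could hold a good together with only \emph{nice} chores (those in $C^{j(i)}$), and then $i'$ would be topped by a good, not a terrible chore. Your chore-removal argument (``take $c$ to be the least important chore of $A_{i'}$'') then fails, since $A_{i'}\setminus\{c\}$ can be topped by a good more important than $g^*$. The lemma is in fact true, but one must split on which of $i,i'$ envies the other and, in the case $i'$ envies $i$, use the good-removal side of \EFX{} as well (removing $g^*$ from $A_i$ when $i'$ holds a terrible chore).

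\textbf{The second half is essentially absent.} After the lemma you only sketch a counting argument (``cannot absorb all $2mn$ chores''); this is not how the claim is proved and does not work here---counting enters only later, in the paper's Claim~\ref{claim:efx:hardness:max1chore} and the final contradiction. The actual finish is structural: focus specifically on the agent holding $g_1$ (say agent $1$) and its identical-preference partner $2$. If $2$ envies $1$, pick any chore $c\in A_2$; if $c$ is terrible then removing $g_1$ from $A_1$ does not kill the envy, and if $c$ is nice then removing $c$ from $A_2$ does not kill it---either way \EFX{} fails. If $1$ envies $2$, then the top of $A_1\cup A_2$ is a terrible chore $c\in A_1$; \EFX{} forces $c$ to be the \emph{only} chore in $A_1$, so $A_1=\{c,g_1,\dots\}$. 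Now bring in a \emph{third} agent: pick $j$ with $c\in C^j$, so that from agent $2j$'s viewpoint $A_1$ is topped by $g_1$; since $2j$ holds some chore $c'$, one checks $A_1\succ_{2j} G\setminus A_1 \succeq_{2j} A_{2j}\setminus\{c'\}$, violating \EFX{}. This third-agent step is the crux you are missing; your lemma does not substitute for it, because it only speaks about an agent's own bundle from its own perspective, whereas the contradiction comes from how $A_1$ looks to an agent of a \emph{different} type.

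In short, your lemma is a detour: the paper's direct two-case analysis on the $g_1$-holder and its partner is both shorter and complete, and the decisive move---passing to an agent for whom the single leftover chore is nice---never appears in your outline.
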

\begin{proof}
Assume otherwise, i.e., every agent receives at least one chore.
Let us consider an agent that receives good $g_1$ in $A$.
Without loss of generality, let us assume that $g_1 \in A_1$
(otherwise, we can renumerate the subsets and the agents).
Since agents $1$ and $2$ have identical preferences, i.e., $\rhd_1 = \rhd_2$,
we have two cases: either agent $2$ envies agent $1$ (case I),
or agent $1$ envies agent $2$ (case II).

Case I:
Let us denote an arbitrary chore
that agent $2$ receives in $A$ by $c \in A_2$.
If $c \not \in C^1$, then for both agents $1$ and $2$
chore $c$ is more important than good $g_1$.
Hence, since $2$ envies $1$, we have $A_1 \setminus \{g_1\} \succ_2 A_2$.
Thus, $A$ is not \EFX{}---a contradiction.
On the other hand, if $c \in C^1$, then for both $1$ and $2$
chore $c$ is less important than good $g_1$.
Hence, since $2$ envies $1$, even if we remove $c$,
agent $2$ would still envy $1$, i.e., $A_1 \succ_2 A_2 \setminus \{c\}$.
Again, this would mean that $A$ is not \EFX{},
which concludes the analysis of this case.

Case II:
Now, let us assume that agent $1$ envies agent $2$.
Since agent $2$ cannot receive a more important good than $g_1$,
the only case in which $1$ can envy $2$ is if $1$ receives a chore, $c$,
that is the most important item for $1$ and $2$ among all items in $A_1 \cup A_2$.
This implies that $c \not \in C^1$.
Observe that this means that $1$ cannot receive any other chore $c' \in C$.
Otherwise, $1$ would still prefer $A_2$ over $A_1 \setminus \{c'\}$,
which would mean that $A$ is not \EFX{}.
Hence, $A_1$ consists of $c, g_1$ and possibly other goods.
Take arbitrary $j \in [n]$ such that $c \in C^j$
(since every $u_i$ belongs to some $S_j$,
there has to exist one).
Now, observe that agent $2j$, prefers $A_1$ even to the bundle
consisting of all of the remaining goods and no chores, i.e.,
$A_1 \succ_{2j} (G \setminus A_1)$,
because $A_1$ contains $g_1$ and just one chore, $c$,
for which $g_1 \rhd_{2j} c$.
However, by our assumption agent $2j$ has some chore, $c' \in C$.
Therefore, $2j$ envies $1$ even without $c'$,
i.e., $A_1 \succ_{2j} (G \setminus A_1) \succeq_{2j} (A_j \setminus \{c'\})$.
Thus, $A$ is not \EFX{}, which concludes the proof of the claim.
\end{proof}

Next, for every $j \in N$,
by $UC(A_j)$ let us denote the set of \emph{uncovered chores}
received by agent $j$, i.e.,
chores that are more important to $j$
then the most important good they received
(or all $j$'s chores, if $j$ does not have goods).
Formally,
\(
    UC(A_j) = \{c \in A_j \cap C : \{ c \} \rhd_j (G \cap A_j) \}.
\)
Let us show, that no agent can receive more than one uncovered chore.
\begin{claim}
\label{claim:efx:hardness:max1chore}
If $A=(A_1,\dots,A_{2n})$ is an \EFX{} allocation,
then for every $j \in [2n]$,
we have $|UC(A_j)| \le 1$.
\end{claim}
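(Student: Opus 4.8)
The plan is to argue by contradiction: assume $A$ is \EFX{} but some agent $j$ has $|UC(A_j)|\ge 2$. Let $\{j,j'\}$ be the pair of identically-minded agents corresponding to a subset $S_t$, so that $\rhd_j\,=\,\rhd_{j'}$ and their common importance ordering has the three-block shape $(C\setminus C^t)\rhd_j G\rhd_j C^t$. Since $A_j$ contains two chores each of which is $\rhd_j$-above every good in $A_j$, the two $\rhd_j$-most important items of $A_j$ are also chores; call them $\hat c_1\rhd_j\hat c_2$.

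The engine of the argument is the following consequence of \EFX{}: writing $B:=A_j\setminus\{\hat c_2\}$, one has $B\succeq_j A_i$ for \emph{every} agent $i\in N$. Indeed, if $j$ envies $i$ this is the chore-removal clause of \EFX{} applied to the chore $\hat c_2\in C_j\cap A_j$; and if $j$ does not envy $i$ then $B\succ_j A_j\succeq_j A_i$, since removing a chore can only help. Note $B\neq\emptyset$ and its $\rhd_j$-top item is the chore $\hat c_1$. Now fix any $i\neq j$: bundles are disjoint and $\hat c_1\in A_j$, so $\hat c_1\in A_i\triangle B$, and because $\hat c_1$ is the $\rhd_j$-top of $B$ it is the $\rhd_j$-top of $A_i\triangle B$ \emph{unless} $A_i$ contains some item $\rhd_j\hat c_1$. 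If $A_i$ contains no such item, then the most important differing item is $\hat c_1$, a chore lying in $B$, whence $A_i\succ_j B$, contradicting $B\succeq_j A_i$. Hence every agent $i\neq j$ holds some item $\rhd_j\hat c_1$, and (by disjointness) these items are pairwise distinct.

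To finish I would split on where $\hat c_1$ sits. If $\hat c_1\in C\setminus C^t$, then every item $\rhd_j\hat c_1$ lies in the topmost block $C\setminus C^t$, hence is a chore; but by \cref{claim:efx:hardness:nochores} some agent receives no chore at all, and that agent is not $j$ (as $A_j\ni\hat c_1$), contradicting the previous paragraph. It remains to handle $\hat c_1\in C^t$: then $A_j$ can contain no good and no chore from $C\setminus C^t$ (either would be $\rhd_j\hat c_1$ and hence outrank $\hat c_1$ in $A_j$), so $\emptyset\neq A_j\subseteq C^t$ with $|A_j|\ge 2$. Here I would apply the engine to $i=j'$: $A_{j'}$ must hold a chore $y\rhd_j\hat c_1$, and then, running the chore-removal and good-removal clauses of \EFX{} between $j$ and $j'$ in both directions and exploiting that $C\setminus C^t$ is the topmost block of \emph{every} agent's ordering, one derives that $A_{j'}$ contains at most one chore above $\hat c_1$ and that its only good (if any) lies below that chore; a short comparison with the holder of $g_1$, in the spirit of the first case and of \cref{claim:efx:hardness:nochores}, then yields the contradiction.

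I expect this last case ($A_j\subseteq C^t$) to be the real obstacle: there the twin pair alone does not immediately violate \EFX{}, so one must carefully track — for $j$, for $j'$, and for the holders of the top goods — which way each envy relation points and whether each bundle contains a good. Everything else is mechanical repetition of the inequality $B\succeq_j A_i$ against the fixed three-block shape of the orderings.
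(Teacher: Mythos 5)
Your setup, your ``engine'' ($B:=A_j\setminus\{\hat c_2\}$ satisfies $B\succeq_j A_i$ for every $i$, whether or not $j$ envies $i$), and your first case ($\hat c_1\in C\setminus C^t$) are all correct. The gap is the second case, $\hat c_1\in C^t$: you do not actually prove it, you only outline a plan involving the twin $j'$, both directions of envy, and ``a short comparison with the holder of $g_1$,'' and you yourself flag it as the real obstacle. As written I cannot verify that plan, and it is in any event a detour, because this case closes in one line with machinery you already have. Since $\hat c_1$ is the top item of $A_j$ and lies in the bottom block $C^t$, you correctly note $A_j\subseteq C^t$, so $B$ consists solely of chores. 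Now apply the engine not to $j'$ but to the chore-free agent $i_0$ guaranteed by \cref{claim:efx:hardness:nochores}: $A_{i_0}$ contains only goods (or is empty), every good is $\rhd_j$-above every element of $B\subseteq C^t$, and the bundles are disjoint, so the $\rhd_j$-top of $A_{i_0}\triangle B$ is either a good in $A_{i_0}$ or the chore $\hat c_1\in B$; either way $A_{i_0}\succ_j B$, contradicting $B\succeq_j A_{i_0}$. No twins, no $g_1$, no tracking of envy directions is needed.

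For comparison, the paper's proof dispenses with the case split entirely: it removes an \emph{arbitrary} chore $c\in A_j$, observes that $A_j\setminus\{c\}$ still contains an uncovered chore (which is $\rhd_j$-above every good in $A_j$, and if $A_j$ contains no good then $A_j\setminus\{c\}$ contains no good at all), and concludes directly that $A_{i_0}\succ_j A_j\setminus\{c\}$ for the chore-free agent $i_0$ of \cref{claim:efx:hardness:nochores}. Your intermediate conclusion ``every agent other than $j$ holds an item $\rhd_j$-above $\hat c_1$'' is true but weaker than what the engine already gives you, and it is what forces your case analysis. I would restructure the argument to compare $B$ (or $A_j\setminus\{c\}$) against $A_{i_0}$ directly.
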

\begin{proof}
Assume by contradiction that $A=(A_1,\dots,A_{2n})$ is an \EFX{} allocation
and there exists $j \in [2n]$ such that $UC(A_j) > 1$.
Let $c \in A_j$ be an arbitrary uncovered chore of agent $j$.
From Claim~\ref{claim:efx:hardness:nochores}
we know that there exists agent $i \in [2n]$
such that $A_i \cap C = \emptyset$.
Hence, since $A_j \setminus \{c\}$
still contains at least one uncovered chore for $j$,
we have that $A_i \succ_j A_j \setminus \{c\}$.
Thus, $A$ is not \EFX{}---a contradiction.
\end{proof}

Now, we will show a contradiction in the total number of uncovered chores
that we can have in allocation $A$.
From Claim~\ref{claim:efx:hardness:max1chore}
we get that $|UC(A_j)| \le 1$,
for every $j \in [2n]$.
Also, from Claim~\ref{claim:efx:hardness:nochores}
we know that there exists $j \in [2n]$
such that $|UC(A_j)|=0$.
Combining both facts together, we obtain an upper bound
\[
    \textstyle \sum_{j \in [2n]} |UC(A_j)| \le 2n - 1.
\]

Next, let us focus on the lower bound.
To this end,
observe that every chore $c \in C$
is not uncovered only if it is
assigned to an agent $j \in [2n]$
that receives a good, i.e., $A_j \cap G \neq \emptyset$,
and $c \in C^{\lfloor j/2 \rfloor}$.
By $K = \{j \in [2n] : A_j \cap G \neq \emptyset\}$ let us denote
the set of agents that received at least one good.
Also, by $\mathcal{K} = \{S_{\lfloor j /2 \rfloor} : j \in K\}$
let us denote the subfamily of subsets associated with agents in $K$.
Since $|\mathcal{K}| \le |K| \le k$ and
there is no set cover of size $k$ in the original instance,
we know that there exists $u_i \in \mathcal{U}$
such that $u_i \not \in S_j$ for every $S_j \in \mathcal{K}$.
Thus, chores $c_{i,1}, \dots , c_{i,2n} \not \in C^{\lfloor j/2 \rfloor}$
for every $j \in K$, which means they are uncovered.
Hence, 
\[
    \textstyle \sum_{j \in [2n]} |UC(A_j)| \ge 2n.
\]
However, this is in contradiction with our upper bound,
which concludes the proof.
\end{proof}

\subsection[Proof of Theorem 2]{Proof of \cref{thrm:efx+po+sep}}
\label{app:efx+po:separable}

\thrmefxposep*

\begin{proof}
 Let $(N, M, G, \rhd)$ be an arbitrary separable instance such that
    $\rhd_i(1) \in C_i$ for every $i \in N$.
    We will show that allocation $A$ returned by
    Algorithm~\ref{alg:efx+po} for this instance
    is \EFX{} and PO and it computes in polynomial time.
    For the latter, observe that the total number of iterations
    in all phases is bounded by $|N| + |M|$.
    Hence, let us focus on proving \EFX{} and PO. 
    We denote the partial allocation obtained at the end of Phase 1 by $B$.
    In what follows, we will first show that $B$ and $A$ are \EFX{},
    and then that $A$ is PO as well.
    Throughout the proof we will use the notation introduced in Algorithm~\ref{alg:efx+po},
    but let us fix $N'$ as the set of agents that did not receive an item in Phase 1
    of the algorithm, i.e., $N' = \{i \in N: B_i = \emptyset\}$.
    

    \underline{\textit{$B$ is \EFX{}}}.
    Fix arbitrary $i, j \in N$ such that $i < j$.
    If $i \in N'$, then $j \in N'$ as well. 
    Thus, $B_i = B_j = \emptyset$ and there is no envy between $i$ and $j$. 
    Hence, assume that $i \notin N'$. 

    If $j \in N'$, then we know that $|\bar{C}| < n$.
    Thus, $B_i$ consists of exactly one common chore $c \in \bar{C}$ and
    goods assigned to it in line 6, i.e., $B_i \setminus \{c\} \subseteq G_i$.
    The instance is separable, hence $c$ is more important for $j$
    than every item it considers as a good i.e., $\{c\} \rhd_j G_j$.
    Thus, $j$ does not envy $i$. 
    Now, $i$ envies $j$.
    However, since $j$ does not have any items and $i$ has only one chore,
    the only item we have to check for \EFX{} is $c$.
    But since $B_i \setminus \{c\} \subseteq G_i$,
    we know that $B_i \setminus \{c\} \succeq_i B_j$.
    Thus, $i$ and $j$ do not violate \EFX{}.
    
    Finally, if $j \notin N'$, then $j$ has exactly
    one common chore $c \in \bar{C}$ in its bundle, i.e.,
    $B_j \setminus \{c\} \subseteq G_j$.
    We know that $i$ will not envy $j$,
    because $i$ picked its chores before $j$ in line 2,
    so $c$ is more important to $i$ than every item in $B_i$
    (otherwise $i$ would pick $c$ instead).
    On the other hand, $j$ might envy $i$.
    However, notice that $j$ received its goods in line 6 before $i$,
    so every item in $B_i$ is a chore for $j$, i.e., $B_i \cap G_j = \emptyset$.
    Thus, the only item we have to check for \EFX{} is chore $c$.
    But since $B_j \setminus \{c\} \subseteq G_j$ and $i$ has some common chore $c'$
    that is more important to $j$ than all goods, i.e., $\{c'\} \rhd_j G_j$,
    we have $B_j \setminus \{c\} \succ_j B_i$.
    Thus, $i$ and $j$ do not violate \EFX{}.

    \underline{\textit{$A$ is \EFX{}}}.
    Fix arbitrary $i, j \in N$ such that $i < j$.
    If $i \in N'$, then $j \in N'$ as well.
    If $A_i = A_j = \emptyset$,
    then $i$ and $j$ do not violate \EFX{}.
    Otherwise, observe that one of them had to
    receive some good in line 10 of the algorithm.
    Let us assume that $i$ received goods in Phase 2
    and it was when $j$ was still ``unhandled'' agent, i.e.,
    $j$'s bundle was empty at that time
    (the case where $j$ received goods before $i$ is analogous).
    Then, $g_i$ is more important for $i$ then any item in $A_j$.
    Thus, $i$ does not envy $j$.
    Now, $j$ may envy $i$.
    However, observe that in Phase 2 we do not assign chores to agents,
    i.e., $A_j \cap C_j = \emptyset$.
    Moreover, the only item that $j$ may consider as good in $A_i$ is $g_i$
    (all other items in $A_i$ are assigned in line 11,
    but we excluded $G_j$ there).
    Hence, the only item we have to check for \EFX{} is $g_i$.
    But since $A_j \subseteq G_j$ and $A_i \cap G_j = \{g_i\}$,
    we have $A_j \succeq_j A_i \setminus \{g_i\}$.
    Hence, $i$ and $j$ do not violate \EFX{}.
    Thus, in the remainder of this part of the proof
    we assume that $i \not \in N'$.
    
    If $j \in N'$, it means that $|\bar{C}| < n$.
    Thus, $A_i$ contains exactly one common chore $c \in \bar{C}$
    and no other chores, i.e., $A_i \setminus \{c\} \subseteq G_i$.
    Instance is separable,
    so $c$ is more important then all goods for $j$.
    Since $A_j$ does not contain any chore, $j$ does not envy $i$.
    Now, $i$ may envy $j$.
    However, recall that $j$ does not receive any item considered as good by $i$,
    i.e., $A_j \cap G_i = \emptyset$.
    Hence, the only item we have to check for \EFX{} is $c$.
    However, since $A_i \setminus \{c\} \subseteq G_i$,
    we have $A_i \setminus \{c\} \succeq_i A_j$.
    Hence, $i$ and $j$ do not violate \EFX{}.

    Finally, if $j \not \in N$,
    then $A_i = B_i$ and $A_j = B_j$.
    Thus, $i$ and $j$ do not violate \EFX{} from the fact that $B$ is \EFX{}.

    \underline{\textit{$A$ is PO}}.
    By $D$ let us denote the partial allocation obtained from $A$
    by removing all common chores i.e.,
    $D_i = A_i \setminus \bar{C}$ for every $i \in N$.
    Observe that in $D$
    all agents have only items they consider as goods,
    i.e,. $D_i \subseteq G_i$ for every $i \in N$.
    Let us first prove that $D$ is PO and then,
    using Lemma~\ref{lmm:po}
    we will show that $A$ is PO as well.
    
    Assume that $D$ is not PO. 
    Then, there exists a partial allocation $D'$ that Pareto dominates $D$.
    First, let us show that $D'_i = D_i$ for every $i \in N \setminus N'$.
    Assume otherwise and take the largest $i \in N \setminus N'$
    such that $D'_i \neq D_i$.
    By Pareto domination, this means that $D'_i \succ_i D_i$.
    Hence, there exists $g \in G_i$
    such that $g \in D'_i \setminus D_i$.
    However, observe that in line 6,
    we assign to $i$ all its goods that were not assigned earlier, i.e.,
    $D_i = G_i \setminus (D_z \cup D_{z-1} \cup \dots D_{i+1})$.
    Since $i$ is the largest,
    $D'_j = D_j$ for every $j \in \{z, z-1, \dots, i+1\}$.
    Hence, there is no other good we can add to $D'_i$---a contradiction.

    Now, let us prove that $D'_i = D_i$ for all $i \in N'$ as well.
    To this end, first by $N^0 = \{i \in N': D_i = \emptyset\}$
    let us denote the set of agents in $N'$ that did not receive a good.
    Now, let us show that $g_i \in D'_i$ for every $i \in N' \setminus N^0$.
    Assume otherwise and among $i \in N' \setminus N^0$ such that $g_i \not \in D'_i$
    let us take the one that received $g_i$ the earliest in the execution of the algorithm.
    Since $D'_i \succ_i D_i$, there must exists $g \in G_i$ such that $g \rhd_i g_i$
    and $g \in D'_i \setminus D_i$.
    Let $j$ be an agent such that $g \in D_j$.
    Since $g \not \in D'_j$ we know that $j \in N'$
    (as $D_k = D'_k$ for all $k \in N \setminus N'$).
    We also know that $j \not \in N^0$.
    Now, if $j$ received $g_j$ after $i$ received $g_i$,
    then $g_i$ is more important to $i$ than all items in $D_j$
    (otherwise $i$ would get one of them).
    But this is a contradiction since $g \in D_j$
    and we assumed $g \rhd_i g_i$.
    On the other hand, if $j$ received $g_j$ before $i$ received $g_i$,
    then from the assumption that $i$ is the earliest,
    we know that $g_j \in D'_j$.
    However, all other goods that $j$ received in line 11,
    are not viewed as goods by $i$, in particular $g \not \in G_i$.
    But this is again a contradiction.
    Thus, we have proven that $g_i \in D'_i$ for every $i \in N' \setminus N^0$.
    
    Now, let us show that in fact $D'_i = D_i$ for every $i \in N' \setminus N^0$.
    Assume otherwise and among $i \in N' \setminus N^0$ such that $D'_i \neq D_i$
    let us take such that $i$ was given $g_i$ as the latest.
    Since $D'_i \succ_i D_i$, there exists $g \in G_i$
    such that $g \in D'_i \not \in D_i$.
    Let $j$ be such that $g \in D_j$.
    Since $i$ was given $g_i$ as the latest
    and $D'_j \neq D_j$,
    we know that $j$ was given $g_j$
    before $i$ received $g_i$.
    We also know that $g \neq g_j$ since $g_j \in D'_j$.
    However, all other items given to $j$ in line 11,
    are not viewed as goods by $i$.
    In particular $g \not \in G_i$,
    which is a contradiction.

    Since for every agent $i \in N \setminus N^0$,
    we know that $D'_i = D_i$,
    this means that the agents that did not receive items in $D$
    do not receive them in $D'$, i.e., $D'_i = \emptyset = D_i$ for each $i \in N^0$.
    Thus, $D' = D$, which contradicts the assumption that $D'$ Pareto dominates $D$.
    Therefore, $D$ is PO.
    
    Finally, observe that since $D_i \subseteq G_i$ for every $i \in N$
    and we have a separable instance in which the most important item of every agent is a chore,
    this means that $\bar{C} \rhd_i D_i$.
    Hence, $A$ is PO from Lemma~\ref{lmm:po}.
\end{proof}

\section[Omitted Material from Section 4]{Omitted Material from \cref{sec:EF1+PO}} \label{app:EF1+PO}

\subsection[Examples for the EF1 and PO Algorithm]{Examples for the EF1 and \PO{} Algorithm}
\label{app:ef1+po:examples}

In this section,
we analyze several examples
that show the complexity of the problem
of finding an \EF{1} and \PO{} allocation.
Specifically,
\begin{itemize}
    \item in \cref{ex5:ef1+po},
    we show that a double round robin algorithm
    as proposed by \cite{ACI+19fair}
    does not necessarily find a \PO{} allocation,
    \item in \cref{ex6:ef1+po},
    we argue that a modified version of the envy-graph algorithm
    as proposed by \cite{BSV21approximate}
    does not always find a \PO{} allocation as well,
    \item in \cref{ex4:ef1+po},
    we demonstrate that
    \cref{alg:ef1+po:n-1} that finds \EF{1} and \PO{}
    allocation when there are at least
    $(n-1)$ common terrible chores
    may not return an \EF{1} allocation,
    when this condition does not hold,
    \item finally, in \cref{ex1:ef1+po},
    we show that bundling a good with all less important chores
    (which is a building block of
    an algorithm proposed by \cite{hosseini2022fairly})
    may preclude achieving an \EF{1} allocation.
\end{itemize}

\begin{example}
\label{ex5:ef1+po} 
Consider a mixed instance with two agents, five items and a profile as follows. The set of common goods is $\bar{G} = \{o_{2}^{+}, o^+_3\}$, and the set of common chores is $\bar{C} = \{o_{1}^{-}, o_{4}^{-}, o_{5}^{-}\}$. Notice that in this instance, there is one ($n-1$) common terrible chore, which is $o^-_1$. 

    \begin{align*}
       1 &: \quad o^-_1 \rhd o^+_2 \rhd \underline{o^+_3} \rhd \underline{o^-_4} \rhd o^-_5 \\
       2 &: \quad \underline{o^-_1} \rhd o^-_4 \rhd \underline{o^-_5} \rhd \underline{o^+_2} \rhd o^+_3 
    \end{align*}
This example shows that even though the double round robin algorithm \citep{ACI+19fair} efficiently computes an \EF{1} allocation under additive valuations, it does not necessarily return a \PO{} allocation
(the allocation returned by this algorithm is underlined). 

The algorithm starts by allocating common chores $(\bar{C})$ according to a round-robin sequence $(1,2)$. Since there are three common chores to be allocated among two agents, the algorithm adds one dummy null item (i.e., an item with zero valuation for both agents) to $\bar{C}$, and then agents come in a round-robin sequence $(1,2)$ and pick their most preferred item left in $\bar{C}$. Thus, agent $1$ receives $\{o^-_4\}$ while agent $2$ receives $\{o^-_1, o^-_5\}$. Then, in the reverse round-robin sequence $(2,1)$, agents pick their most preferred item left in $\bar{G}$. Therefore, agents $2$ and $1$ get $o^+_2$ and $o^+_3$, respectively.

This allocation is not \PO{}, as taking $\{o^+_2, o^-_5\}$ from agent $2$ and giving it to agent $1$ will lead to a Pareto improvement. 


\end{example}

\begin{example}
\label{ex6:ef1+po} 
We revisit the mixed instance given in Example~\ref{ex5:ef1+po}

    \begin{align*}
       1 &: \quad \underline{o^-_1} \rhd \underline{o^+_2} \rhd o^+_3 \rhd o^-_4 \rhd \underline{o^-_5} \\
       2 &: \quad o^-_1 \rhd \underline{o^-_4} \rhd o^-_5 \rhd o^+_2 \rhd \underline{o^+_3} 
    \end{align*}
This example also shows that even though the modified version of the envy-graph algorithm \citep{BSV21approximate} efficiently computes an \EF{1} allocation under doubly monotone valuations, it does not necessarily return a \PO{} allocation
(the allocation returned by this algorithm is underlined).

The algorithm starts by allocating common goods $(\bar{G})$ using the envy-cycle elimination algorithm of \citet{lipton2004approximately} in Phase 1. Thus, agents $1$ and $2$ receive $o^+_2$ and $o^+_3$, respectively. Then, in Phase 2, as long as there is an agent who has no outgoing edge (i.e., a sink) in the envy-graph, that agent receives a common chore from $\bar{C}$. Therefore, agent $1$ receives $\{o^-_5, o^-_1\}$ and agent $2$ receives $o^-_4$.

This allocation is not \PO{}, as taking $\{o^+_3, o^-_4\}$ from agent $2$ and giving it to agent $1$ will lead to a Pareto improvement. 

\end{example}

\begin{example}
\label{ex4:ef1+po}
 Consider a mixed instance with three agents, four items and a profile as follows. The set of common goods is $\bar{G} = \{o_{1}^{+}\}$, and the set of common chores is $\bar{C} = \{o_{2}^{-}, o_{3}^{-}, o_{4}^{-}\}$. Notice that in this instance, there is one ($n-2$) common terrible chore, which is $o^-_2$. 
 \begin{align*}
       1 &: \quad \underline{o^-_2} \rhd \underline{o^+_1} \rhd o^-_3 \rhd o^-_4 \\
       2 &: \quad o^-_2 \rhd o^+_1 \rhd o^-_3 \rhd o^-_4  \\ 
       3 &: \quad o^-_2 \rhd o^+_1 \rhd \underline{o^-_3} \rhd \underline{o^-_4} 
    \end{align*}
This example shows that Algorithm~\ref{alg:ef1+po:n-1} does not necessarily return an \EF{1} allocation when the number of common terrible chores is less than $n-1$
(the allocation returned by this algorithm is underlined).

Suppose the ordering is $(1,2,3)$.
\cref{alg:ef1+po:n-1} starts by assigning $o^+_1$ to agent $1$.
Then, in the reverse ordering $(3,2,1)$, agents get their common non-terrible chores (out of the remaining items), resulting in agent $3$ receiving $\{o^-_3, o^-_4\}$ (and nothing for others).
Since agent $3$ is not envied (such an agent always exists), \cref{alg:ef1+po:n-1} allocates a common terrible chore ($o^-_2$) by running a serial dictatorship with the ordering of $(1,2,3)$ and single quota.

This allocation is not \EF{1}, as agent $3$ whose bundle consists of two chores envies agent $2$ who has an empty bundle
and this envy cannot be eliminated by removal of one item.
\end{example}

\begin{example}
\label{ex1:ef1+po}
 Consider a mixed instance with four agents, seven items and a profile as follows. The set of common goods is $\bar{G} = \{o_{1}^{+}\}$, and the set of common chores is $\bar{C} = \{o_{2}^{-}, o_{3}^{-}, o_{4}^{-}, o_{5}^{-}, o_{6}^{-}, o_{7}^{-}\}$
 (note that agents 1 and 2 have identical preferences
 and the same holds for agents 3 and 4).

    \begin{align*}
       1 &: \quad o^-_2 \rhd o^-_3 \rhd o^-_4 \rhd
        \underline{o^+_1} \rhd \underline{o^-_5} \rhd \underline{o^-_6} \rhd \underline{o^-_7} \\
       2 &: \quad o^-_2 \rhd o^-_3 \rhd o^-_4 \rhd o^+_1 \rhd o^-_5 \rhd o^-_6 \rhd o^-_7 \\ 
       3 &: \quad o^-_5 \rhd o^-_6 \rhd o^-_7 \rhd o^+_1 \rhd \underline{o^-_2} \rhd o^-_3 \rhd o^-_4 \\
       4 &: \quad o^-_5 \rhd o^-_6 \rhd o^-_7 \rhd o^+_1 \rhd o^-_2 \rhd \underline{o^-_3} \rhd \underline{o^-_4}  
    \end{align*}
We will show that giving an agent a good, $o^+_1$,
together with all of the chores
that are less important then $o^+_1$ for this agent,
necessarily results in an allocation that is not \EF{1}
(an example of such allocation is underlined).

Observe that the instance is symmetric,
hence without loss of generality
we can assume that we give good $o^+_1$ to agent 1.
Now, an intuitive idea for guarantying fairness
is to give to agent 1 also all of the chores
that are less important than $o^+_1$
in order to balance its bundle.
In fact, such technique is used by \cite{hosseini2022fairly}
in the algorithm for finding an \EFX{} and \PO{} allocation
in instances where there is an agent without terrible chores.

However, we can show that assignment of $o^+_1, o^-_5, o^-_6$, and $o^-_7$
to agent 1 cannot be extended to an \EF{1} allocation.
To see this, first observe that in such a case
assigning any chore $o^-_2, o^-_3,$ or $o^-_4$
to agent 2 would violate \EF{1}.
But if we assign all three chores to agents 3 and 4,
then one of them will receive two chores.
Then, this agent will envy agent 2
even after removal of one of the chores.
Hence, indeed, an \EF{1} allocation is unobtainable
in this way.

Observe that to get an \EF{1} and \PO{} allocation
it suffices to move $o^-_7$ from agent 1 to agent 2
in the underlined allocation.

\end{example}





\subsection[Proof of Theorem 3]{Proof of \cref{thm:ef1POterrible}}
\label{app:ef1+po:terrible}

In this section,
we provide the pseudocode for \cref{alg:ef1+po:n-1}
and prove the correctness of the algorithm.

\thmefonePOterrible*

\begin{proof}
    Let $(N, M, G, \rhd)$ be an arbitrary lexicographic mixed instance
    with at least $n-1$ common terrible chores,
    i.e., $\bar{C}^* \ge n-1$.
    We assume that $N=[n]$,
    but this is without loss of generality
    as we can relabel the agents.
    We will show that allocation $A$ returned
    by Algorithm~\ref{alg:ef1+po:n-1} for this instance
    is \EF{1} and \PO{} and it computes in polynomial time.
    For the latter, observe that
    the total number of iterations in Phase 1 is bounded by $2|N|$
    and in Phase 2 by $|M|$.
    Hence, let us focus on showing that $A$ is \PO{} and \EF{1}.
    By $B$ let us denote the partial allocation we obtain after the end of Phase 1.
    Since $B$ is \PO{} (as we will show),
    there always exists an agent that is not envied by any other agent
    (otherwise there has to be a cycle of envy,
    which resolved would be a Pareto improvement).
    Thus, there exists ordering $\sigma$
    such that no one envies agent $\sigma_n$
    (let us take the one that the algorithm takes in line 9).
    Finally, by $H$ let us denote all of the common chores
    that are not terrible for some agent, i.e., $H = \bar{C} \setminus \bar{C}^*$.
    In what follows we first show that allocation $B$ is \PO{}
    and then that allocation $A$ is \PO{} and \EF{1}.

    \underline{\textit{$B$ is \PO{}}}.
    Assume that $B$ is not \PO{},
    i.e., there exists an allocation $B'$
    that Pareto dominates $B$.
    Let $i$ be the agent with minimal number that
    receives different bundles in $B$ and $B'$.
    By Pareto domination, this means that $B'_i \succ_i B_i$.
    Hence, there exists a good in $B'_i$ that is not in $B_i$
    or chore in $B_i$ that is not in $B'_i$.

    If there exists a good $g \in G_i$ such that
    $g \in B'_i \setminus B_i$,
    then let $j$ be the agent that has $g$ in $B$.
    This means that $B'_j \neq B_j$,
    which by $i$'s minimality implies that $j > i$.
    This means that $g$ was not given to any agent
    with the number smaller than $i$,
    i.e., $g \not \in A_k$ for every $k \in [i-1]$.
    But since $g \in G_i$,
    good $g$ should be then assigned to agent $i$
    in line 3 before $j$'s turn---a contradiction.

    If there exists a chore $c \in C_i$ such that
    $c \in B_i \setminus B'_i$,
    then observe it is a common chore
    (only common chores are assigned as chores in the algorithm).
    Let $j$ be an agent that receives $c$ in $B'$.
    This means that $B'_j \neq B_j$,
    which by $i$'s minimality implies that $j > i$.
    Note that since $c$ was not assigned to $j$
    in line 6 of the algorithm
    (and $j$ was handled there before $i$),
    it means that $c$ is a terrible chore for $j$,
    i.e., $c \in C^*_j$.
    But then, $c$ is more important to all goods for $j$
    and thus also all items it received in Phase 1,
    i.e., $\{c\} \rhd_j G_j \cup B_j$.
    This implies that $B_j \succ_j B'_j$,
    which contradicts the Pareto domination.
    
    \underline{\textit{$A$ is \PO{}}}.
    Observe that $\bar{C}^*$ is more important to every agent
    than all items it considers as goods.
    Moreover, in Phase 1, we assign
    only goods and chores less important than goods.
    Hence, $\bar{C}^* \rhd_i G_i \cup B_i$ for every $i \in N$.
    Since $B$ is \PO{}, from Lemma~\ref{lmm:po}
    we know that $A$ is \PO{} as well.

    \underline{\textit{$A$ is \EF{1}}}.
    Let us take $i, j \in N$ such that $i$ is before $j$
    in ordering $\sigma$.
    
    If $j \neq \sigma_n$ or $|\bar{C}^*|>n-1$,
    then agent $j$ receives exactly one chore, $c \in \bar{C}^*$, in Phase 2
    of the algorithm.
    Agent $i$ does not envy $j$, because $c$ is more important chore
    than all items in $i$'s bundle
    (otherwise, since $i$ picks in serial dictatorship before $j$,
    $i$ would choose $c$ instead of its chore).
    Now, agent $j$ can envy agent $i$.
    However, agent $i$ has some common terrible chore $c' \in \bar{C}^*$.
    This chore is more important to $j$,
    than all goods and anything it received in Phase 1,
    i.e., $\{c'\} \rhd_j B_j \cup G_j$.
    Hence, $A_j \setminus \{c\} = B_j \succ_j A_i$,
    which means that $i$ and $j$ do not violate \EF{1}.

    If $j = \sigma_n$ and $|\bar{C}^*|=n-1$,
    agent $j$ does not receive any chore in Phase 2,
    i.e., $A_j = B_j$,
    while agent $i$ receives exactly one chore, $c \in \bar{C}^*$.
    Observe that $c$ is more important to $j$ than
    all goods and anything it received in Phase 1,
    i.e., $\{c\} \rhd_j B_j \cup G_j$.
    Hence, agent $j$ does not envy agent $i$.
    Now, agent $i$ envies agent $j$.
    However, recall that we have chosen $\sigma$ in such a way
    that no agent envies $\sigma_n$ in partial allocation $B$.
    Therefore, $A_i \setminus \{c\} = B_i \succeq_i B_j = A_j$,
    which means that $i$ and $j$ do not violate \EF{1}.
\end{proof}

\section[Omitted Material from Section 5]{Omitted Material from \cref{sec:mms+po}}

\subsection[Proof of Theorem 4]{Proof of \cref{thrm:mms+po}}
\label{app:mms+po}

\thrmmmspo*

\begin{proof}
    Let $(N, M, G, \rhd)$ be an arbitrary lexicographic mixed instance with terrible chores,
    i.e., $\rhd_i(1) \in C_i$ for each $i \in N$.
    We assume that $N=[n]$,
    but this is without loss of generality
    as we can relabel the agents.
    We will show that allocation $A$ returned
    by Algorithm~\ref{alg:mms+po} for this instance
    is \MMS{} and PO and it computes in polynomial time.
    For the latter, observe that
    the total number of iterations is bounded by $|N|$.
    Hence, let us focus on first showing \MMS{} and then PO.
    Throughout the proof we will use the notation introduced in the algorithm.

    \underline{\textit{MMS}}.
    Since the most important item for every agent is a chore,
    by Proposition~\ref{prop:mms_threshold},
    we have that $\MMS{}_i = \rhd_i(1) \cup G_i$
    for every agent $i \in N$.
    Thus, for every agent $i \in N$ that does not receive
    its most important item, i.e., $\rhd_i(1) \not \in A_i$,
    we have $A_i \succ_i MMS_i$.
    Moreover, the only agent
    that can receive its most important item 
    is agent 1.
    However, we give agent 1 all of the items it considers as goods in line 3
    and we do not give it any chores,
    except for possibly $c^*$ in line 7.
    Hence, $A_1 \succeq_1 MMS_1$,
    which means that $A$ satisfies \MMS{}.


    \underline{\textit{PO}}.
    Assume that $A$ is not PO, i.e.,
    there exists allocation $A'$
    that Pareto dominates $A$.
    By $i^*$ let us denote the agent that receives
    item $c^*$ in allocation $A$.
    Also, let us take the smallest $i \in [n]$
    such that agent $i$ receives different bundles
    in allocations $A$ and $A'$.
    From PO domination,
    this means that $A'_i \succ_i A_i$.

    If $i < i^*$, then we know that the only items
    received by $i$ in the algorithm
    are the goods assigned to it in line 3, i.e.,
    $A_i = G_i \setminus \bigcup_{j \in [i-1]} A_j$.
    Since $i$ is the smallest,
    we have $A'_j = A_j$ for every $j \in [i-1]$.
    Thus, $A_i$ is the best possible bundle agent $i$ can receive.
    Hence, $A'_i \not\succ_i A_i$---a contradiction.

    If $i = i^*$ and $c^*$ is a good for agent $i$,
    then we can get the contradiction
    in exactly the same way as for $i < i^*$.
    Hence, assume that $i = i^*$ and
    $c^*$ is a chore for $i$.
    We will first show that $c^* \in A'_i$.
    Assume otherwise, i.e., there exists agent $j \in N \setminus \{i\}$
    such that $c^* \in A'_j$.
    Since $c^* \not \in A_j$ and
    $i$ is the agent with the smallest number
    that receives different bundles in $A$ and $A'$,
    we know that $j > i$.
    On the other hand,
    since $c^*$ is a chore for agent $i$,
    $c^*$ has to be the most important item for all $j > i$,
    i.e., $\rhd_j(1) = c^*$.
    But that would mean that $A_j \succ_j A'_j$,
    which is a contradiction.
    Now, let us show that the remaining items in $A_i$ and $A'_i$
    must also be the same.
    Observe that apart from $c^*$,
    the only items received by $i$ are goods that it gets in line 3.
    Hence, $A_i = \{c^*\} \cup G_i \setminus \bigcup_{j \in [i-1]} A_j$.
    But since $c^* \in A'_i$ and $A'_j = A_j$ for every $j \in [i-1]$
    this is the best possible bundle agent $i$ can receive.
    Thus, $A'_i \not\succ_i A_i$---a contradiction.
    
    If $i^* < i < n$,
    we can get the contradiction
    in exactly the same way as for $i < i^*$.

    Finally, observe that $i$ cannot be equal to $n$,
    as there always have to be at least two agents
    with different bundles in two different allocations.
\end{proof}

\subsection[Proof of Theorem 5]{Proof of \cref{thrm:mms+rm}}
\label{app:mms+rm}

Let us begin by introducing formal definition of rank maximallity (RM)
and required notation.

By a \emph{position} of item $o \in M$ in agent's $i$ importance ordering, we mean a number $\pos_i(o)$ such that $\rhd_i(\pos_i(o)) = o$.
Allocation $A$ is \emph{rank maximal} (RM), 
    if for every common chore $c \in \bar{C}$
    it holds that $c \in A_i$ for some agent $i \in N$ such that
    $\pos_i(c) = \max_{j\in N} \pos_j(c)$,
    and for every other item $o \in M \setminus \bar{C}$,
    it holds that $o \in A_i$ for some agent $i \in N^o$,
    where $N^o = \{j \in N: o \in G_j\}$,
    and $\pos_i(o) = \min_{j \in N^o} \pos_j(o)$. 

\begin{example}
We revisit the mixed instance given in Example~\ref{ex5:ef1+po} to show a
rank maximal allocation. 
    \begin{align*}
       1 &: \quad o^-_1 \rhd \underline{o^+_2} \rhd \underline{o^+_3} \rhd \underline{o^-_4} \rhd \underline{o^-_5} \\
       2 &: \quad \underline{o^-_1} \rhd o^-_4 \rhd o^-_5 \rhd o^+_2 \rhd o^+_3 
    \end{align*}
The underlined allocation is \RM{}
since every common chore, $o^-_1, o^-_4$ and $o^-_5$,
is allocated to an agent that has it in the position with the highest number
and every common good, $o^+_2$ and $o^+_3$,
is allocated to an agent that has it in the position with the lowest number.
Observe that the only other \RM{} allocation
is the one in which agent 1 receives all of the items.
\end{example}

\thrmmmsrm*

\begin{proof}
Checking whether a given allocation is \MMS{} and RM
can be done in polynomial time,
hence our problem is in NP.
Thus, let us focus on showing the hardness of the problem.

To this end, we will follow a reduction from \textsc{Set Cover}.
In this problem,
we are given a constant $k \in \mathbb{N}$,
a universe of elements,
$\mathcal{U} = \{u_1,\dots,u_m\}$,
and a family of subsets of the universe,
$\mathcal{S} = \{S_1,\dots,S_n\}$.
We assume that every element $u_1,\dots,u_m$
belongs to at least one subset from $\mathcal{S}$.
The goal is to decide whether there exists
a subfamily of the size at most $k$
that covers every element of the universe, i.e.,
$\mathcal{K} \subseteq \mathcal{S}$
such that $|\mathcal{K}| \le k$ and
\(
    \bigcup_{S_j \in \mathcal{K}} = \mathcal{U}.
\)
Such task is known to be NP-complete.

For every instance of this problem,
we define the corresponding
lexicographic mixed instance as follows.
Let us take one agent, $j$, for every set $S_j \in \mathcal{S}$,
$m$ dummy agents, $n+1,\dots,n+m$,
and two additional agents,
called \emph{filler} agent $f$
and \emph{setting} agent $x$.
Hence, we have $N = \{x,f,1,2,\dots,n + m\}$.
Next, let us define the items in our instance---there will be four types of such.
First, let us have $m$ common chores $c_1,\dots,c_m$
corresponding to the elements of the universe $\mathcal{U}$.
Next, let us have $m$ \emph{filler} common goods $f_1,\dots,f_m$ that
correspond to the elements of the universe $\mathcal{U}$ as well
(but their assignment will be straightforward).
Moreover, let us have $k$ common goods $g_1,\dots,g_k$ the
assignment of which will correspond
to the choice of $k$ sets from the family $\mathcal{S}$.
Finally, let us add one common chore $c_x$ and one common good $g_x$
with which we will be able to restrict the number of possible
MMS and RM allocations.
This gives us $2m + k + 2$ items in total
and every one of them is either a common good or a common chore,
i.e., for every $j \in N$,
$G_j = \{f_1,\dots,f_m,g_x,g_1,\dots,g_k\}$ and
$C_j = \{c_x, c_1,\dots,c_m\}$.

Now, let us define the importance orderings of the agents.
First, for brevity, we define the following shorthand notation:
for every linear order $Z = z_1 \rhd z_2 \rhd \dots \rhd z_r$,
and two different elements of this order $z_i$, $z_j$ such that $z_i \rhd z_j$,
by $s(z_i,z_j,Z)$, we will understand a linear order obtained from $Z$
by swapping the positions of the elements $z_i$ and $z_j$ in that order, i.e.,
\(
    s(z_i,z_j,Z) = 
    z_1 \rhd z_2 \rhd \dots
        \rhd z_{i-1} \rhd z_j \rhd z_{i+1} \rhd \dots
        \rhd z_{j-1} \rhd z_i \rhd z_{j+1} \rhd \dots
        \rhd z_r.
\)

Let us begin by defining the importance ordering of the setter agent as follows:
\[
    x: \ \ g_x \rhd c_1 \rhd \dots \rhd c_m \rhd c_x \rhd f_1 \rhd \dots \rhd f_m \rhd g_1 \rhd \dots \rhd g_k. 
\]
Next, let us define the importance ordering of the filler agent as:
\[
    f: \ \ f_1 \rhd \dots \rhd f_m \rhd c_x \rhd g_x \rhd c_1 \rhd \dots \rhd c_m \rhd g_1 \rhd \dots \rhd g_k. 
\]
For the remaining agents, let us first define the standard importance ordering,
which we will then modify for each of the agents independently.
The standard importance ordering is given by:
\[
    T = g_x \rhd c_x \rhd c_1 \rhd \dots \rhd c_m \rhd g_1 \rhd \dots \rhd g_k \rhd f_1 \rhd \dots \rhd f_m. 
\]
Now, for every $j \in [n]$,
let us denote the elements of subset $S_j$ by
$S_j = \{u_{j,1},u_{j,2},\dots,u_{j,m_j}\}$.
Then, we define the importance ordering of agent $j$ as:
\[
    j : \quad s(c_{j,1},f_{j,1},s(c_{j,2},f_{j,2},s(\dots,s(c_{j,m_j},f_{j,m_j},T)\dots))),
\]
i.e., the standard importance ordering $T$
with the positions of $c_i$ and $f_i$ swapped
for every $u_i \in S_j$.
Finally, we set the importance ordering
of every dummy agent $j \in \{n+1,\dots,n+m\}$
to standard ordering $T$, i.e.,
\[
    j: \ \ g_x \rhd c_x \rhd c_1 \rhd \dots \rhd c_m \rhd g_1 \rhd \dots \rhd g_k \rhd f_1 \rhd \dots \rhd f_m. 
\]

In what follows,
we will prove that there exists an allocation that satisfies \MMS{} and RM,
if and only if,
there exists a set cover in the original instance.
To this end,
we first characterize \MMS{} allocations (Claim~\ref{claim:mms+rm:mms}) and RM allocations (Claim~\ref{claim:mms+rm:rm}) .

\begin{claim}
\label{claim:mms+rm:mms}
An allocation $A = (A_x, A_d, A_1, \dots, A_{n+m})$ is \MMS{},
if and only if, for every agent,
either the most important item in its bundle is a good,
or its bundle is empty.
\end{claim}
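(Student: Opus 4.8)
The plan is to apply the maximin-share characterization of \cref{prop:mms_threshold} to the constructed instance. The key point is that in this instance the maximin share of \emph{every} agent is the empty bundle, after which the claim reduces to reading off from the definition of $\succ_j$ when a bundle is weakly preferred to $\emptyset$ in the lexicographic order.

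First I would dispose of a trivial case: if $k \ge n$, then $\mathcal{S}$ itself is a set cover of size $n \le k$, so the \textsc{Set Cover} instance is a yes-instance; in the reduction we may therefore assume $k < n$. Under this assumption, the total number of goods satisfies $|G_j| = m + k + 1 < m + n + 2 = |N|$ for every agent $j \in N$. Next I would check that $g_x$ is the most important item of every agent except the filler agent $f$, whose most important item is $f_1$: indeed $g_x$ is the top of the setter's ordering, the top of the standard ordering $T$, and the top of every dummy agent's ordering; moreover the swaps $s(c_i, f_i, \cdot)$ used to build the subset agents' orderings never move $g_x$, since in $T$ the item $g_x$ lies strictly above every $c_i$ and every $f_i$. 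In all cases the top item ($g_x$, or $f_1$ for $f$) is a common good, i.e.\ $\rhd_j(1) \in G_j$ for every $j \in N$. Since also $|G_j| < |N|$, \cref{prop:mms_threshold} yields $\MMS{}_j = \emptyset$ for every $j \in N$.

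It then remains to observe that $A$ satisfies \MMS{} exactly when $A_j \succeq_j \MMS{}_j = \emptyset$ for every $j \in N$, and to unwind the definition of the lexicographic preference $\succ_j$: for a non-empty bundle $A_j$ we have $A_j \succ_j \emptyset$ iff $\rhd_j(1, A_j) \in A_j \cap G_j$, i.e.\ iff the most important item of $A_j$ is a good, and otherwise $\emptyset \succ_j A_j$; trivially $\emptyset \succeq_j \emptyset$. Hence $A_j \succeq_j \emptyset$ holds precisely when $A_j = \emptyset$ or the most important item of $A_j$ is a good, which is exactly the stated condition. I do not expect a genuine obstacle here; the only points needing care are the trivial-case argument ensuring $|G_j| < |N|$ and the bookkeeping check that $g_x$ survives the element-dependent swaps as the unique top item of each subset agent.
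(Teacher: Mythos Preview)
Your proposal is correct and follows essentially the same route as the paper: establish that $\MMS{}_j = \emptyset$ for every agent via \cref{prop:mms_threshold} (top item a good, fewer goods than agents), then read off when $A_j \succeq_j \emptyset$. Your treatment is in fact more careful than the paper's own proof---you count the goods correctly as $m+k+1$ (the paper writes $k+m$, omitting $g_x$) and you justify the assumption $k < n$ by disposing of the trivial \textsc{Set Cover} case, whereas the paper simply asserts $k \le n$.
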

\begin{proof}
Every item is either a common good or a common chore,
hence for every agent $j \in N$, we have $|G_j| = k+m$.
$k \le n$ and we have $n + m + 2$ agents,
thus for every agent there is less goods then agents.
Since the importance order of every agent starts with a good,
by Proposition~\ref{prop:mms_threshold}
we have that the maximin share allocation for every agent is an empty allocation.
Thus, in an allocation that satisfies maximin share,
every agent has to receive either its maximin share bundle,
which is an empty set,
or a better bundle,
which is any bundle in which the most important item is a good.
\end{proof}

\begin{claim}
\label{claim:mms+rm:rm}
An allocation $A = (A_x, A_d, A_1, \dots, A_{n+m})$ is RM,
if and only if,
\begin{enumerate}[label = \alph*)]
    \item the setting agent receives chore $c_x$, i.e, $c_x \in A_x$,
    \item good $g_x$ is given to any agent apart from the filler agent, i.e., $g_x \not \in A_f$,
    \item the filler agent receives all filler goods, i.e., $f_1,\dots,f_m \in A_f$,
    \item goods $g_1,\dots,g_k$ are received by agents $[n+m]$, i.e.,
        $g_i \in \bigcup_{j = 1}^{n+m} A_j$, for every $i \in [k]$, and
    \item for each $i \in [m]$, chore $c_i$ is received by an agent $j \in [n]$
        such that $u_i \in S_j$.
\end{enumerate}
\end{claim}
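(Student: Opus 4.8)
The plan is to verify the characterization by directly unpacking the definition of rank maximality against the importance orderings constructed in the reduction. Recall that an allocation is RM if every common chore goes to an agent who ranks it \emph{latest} (largest position) among all agents, and every common good goes to an agent who ranks it \emph{earliest} (smallest position) among all agents (here every item is a common good or chore, so $N^o = N$ for goods). So the proof is essentially a bookkeeping exercise: for each of the five item classes ($c_x$, $g_x$, the filler goods $f_i$, the selection goods $g_j$, and the element chores $c_i$) I would compute which agents attain the relevant extremal position, and read off conditions (a)--(e).

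\medskip
\noindent\textbf{Step 1: chore $c_x$.} Scan the position of $c_x$ in each ordering. In the setter $x$ it sits after all of $c_1,\dots,c_m$ (position $m+2$); in the filler $f$ it is very early (position $m+1$); in every other agent (standard ordering $T$, and all the swapped variants, since the swaps only touch $c_i/f_i$ pairs) it is at position $2$. Hence the unique maximizer is the setter $x$, giving condition (a).

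\medskip
\noindent\textbf{Step 2: good $g_x$.} Compute the position of $g_x$: it is $1$ in $x$ and in $T$ and all its swapped variants and all dummies, but in the filler $f$ it is at position $m+2$. So the minimum position is $1$, attained by everyone \emph{except} the filler; RM forbids $g_x \in A_f$, which is condition (b).

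\medskip
\noindent\textbf{Step 3: filler goods $f_i$.} In the filler $f$, good $f_i$ is at position $i$ (the top block). In the setter $x$ it comes after $c_x$, hence after position $m+2$. In the standard ordering $T$ and its variants the $f_i$'s are in the last block, so position $\ge k+1+\text{something}$; even after a swap with $c_i$, the position is still much larger than $i$. Therefore the unique minimizer of the position of each $f_i$ is the filler $f$, forcing $f_1,\dots,f_m \in A_f$, which is condition (c). Note that (b) and (c) are consistent: $f$ takes all filler goods but not $g_x$.

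\medskip
\noindent\textbf{Step 4: selection goods $g_j$.} In every ordering the block $g_1 \rhd \dots \rhd g_k$ sits after all the chores. In $x$ and $f$ it is the very last block; in the standard ordering $T$ and its variants it is the second-to-last block (before the $f_i$'s), and a $c_i/f_i$ swap does not move any $g_j$. A direct comparison shows the position of each $g_j$ is strictly smaller in the standard-type orderings (agents $[n+m]$) than in $x$ or $f$, so RM forces each $g_j$ to an agent in $[n+m]$; that is condition (d).

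\medskip
\noindent\textbf{Step 5: element chores $c_i$ --- the main obstacle.} This is the crux. For a fixed $i$, I need to identify the agents who rank $c_i$ latest. In the standard ordering $T$ and every dummy agent, $c_i$ is at position $i+2$ (inside the block $c_1 \rhd \dots \rhd c_m$ which starts at position $3$). In the setter $x$, the chores start at position $2$, so $c_i$ is at position $i+1$ --- earlier. In the filler $f$, $c_i$ is at position $m+4+(i-1)$, which is later than $i+2$. In a subset-agent $j \in [n]$: if $u_i \notin S_j$ then the ordering coincides with $T$ at position $i+2$; if $u_i \in S_j$, then $c_i$ has been swapped with $f_i$, moving $c_i$ into the last block --- position $\ge k + 1 + \dots$, the latest possible, and strictly later than the filler $f$. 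So the unique set of maximizers of the position of $c_i$ is exactly $\{j \in [n] : u_i \in S_j\}$ (the subset agents covering $u_i$), which gives condition (e). The care needed here is to check that the swap pushes $c_i$ strictly past the filler agent's position and past all the other subset agents who do \emph{not} cover $u_i$, and that no dummy or setter can tie --- this is where I expect to have to be most careful with the index arithmetic.

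\medskip
\noindent\textbf{Wrap-up.} Conversely, I would observe that conditions (a)--(e) between them specify, for every item, exactly the set of agents to which RM permits that item to be assigned; since they are mutually consistent (in particular $f$ receiving $f_1,\dots,f_m$ but not $g_x$, and the $g_j$'s and $c_i$'s going to disjoint-role agents), any allocation meeting (a)--(e) places every item with a position-extremal agent and is therefore RM. This establishes the equivalence.
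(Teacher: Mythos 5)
Your proposal is correct and follows essentially the same route as the paper's proof: a case-by-case computation of the position of each item class in every agent's importance ordering, identifying the position-extremal agents to read off conditions (a)--(e), with the converse following because these conditions cover all items. The only blemish is a harmless off-by-one in the filler agent's position of $c_i$ (it is $m+2+i$, not $m+3+i$), which does not affect any comparison.
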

\begin{proof}
Every item is either a common good or a common chore,
hence allocation $A$ is rank maximal if and only if
every common good is given to an agent for which it is in the highest position,
and every common chore is given to an agent for which it is in the lowest position.
We  first prove that rank maximality implies points \textit{a}--\textit{e}.

For \textit{a},
observe that $c_x$ is ranked in position $m+2$ by the setting agent $x$,
in position $m+1$ by the filler agent $f$,
and in position 2, by every other agent.
Thus, in a rank maximal allocation, $c_x$ has to be given to agent $x$.

For \textit{b},
note that $g_x$ is ranked in position $m+2$ by the filler agent $f$
and in position $1$ by all other agents.
Hence, in a rank maximal allocation, $g_x$ can be given to every agent except $f$.

For \textit{c},
fix $i \in [m]$
and observe that
\[ 
    \pos_j(f_i) = \begin{cases}
      m + 2 + i,         & \mbox{if } j = x,\\
      i,                 & \mbox{if } j = f,\\
      2 + i,             & \mbox{if } j \in [n] \mbox{ and } u_i \in S_j,\\
      m + k + 2 + i,     & \mbox{otherwise.}
    \end{cases}
\]
Thus, the smallest position
is in the importance ordering of the filler agent $f$.
Hence, in a rank maximal allocation, $f_i$ has be given to agent $f$,
for every $i \in [m]$.

For \textit{d},
fix $i \in [k]$
and observe that
\[ 
    \pos_j(g_i) = \begin{cases}
      2m + 2 + i,        & \mbox{if } j = x,\\
      2m + 2 + i,        & \mbox{if } j = f,\\
      m + 2 + i,         & \mbox{otherwise.}
    \end{cases}
\]
Hence, the smallest position
is in the importance ordering of the agents $1,\dots,n+m$.
Therefore, in an \RM{} allocation,
every good $g_1,\dots,g_k$
has to be given to one of these agents.

Finally, for \textit{e},
fix $i \in [m]$
and observe that
\[ 
    \pos_j(c_i) = \begin{cases}
      1 + i,            & \mbox{if } j = x,\\
      m + 2 + i,        & \mbox{if } j = f,\\
      m + k + 2 + i,    & \mbox{if } j \in [n] \mbox{ and } u_i \in S_j,\\
      2 + i,            & \mbox{otherwise.}
    \end{cases}
\]
Hence, the highest position number
is in the importance ordering of an agent $j \in [n]$
such that $u_i \in S_j$.
Since we assume that for every $u_i \in \mathcal{U}$
there exists $S_j \in \mathcal{S}$ such that $u_i \in S_j$,
we know that such agent exists.
Therefore, in a rank maximal allocation,
every chore $c_1,\dots,c_m$
has to be given to such an agent.

Points \textit{a}--\textit{e} cover all items in the instance.
Hence, if every item is assigned in accordance with them,
then the allocation is rank maximal.
\end{proof}

Now, let us show that if there exists a set cover $\mathcal{K}$ in the original instance,
then there exists an \MMS{} and RM allocation in the corresponding lexicographic mixed instance.
Without loss of generality, we can assume that $\mathcal{K} = \{S_{1}, \dots, S_{k}\}$
(otherwise we can reorder the subsets and corresponding agents).
Then, for every subset $S_j \in \mathcal{S}$,
by $F(S_j)$ let us denote the subset of elements of $S_j$
for which $S_j$ has the smallest index
among all subsets they belong to, i.e.,
$F(S_j) = \{u_i \in S_j: \forall_{j' < j} u_i \not \in S_{j'}\}$.
Next, let us consider allocation $A=(A_x,A_f,A_1,\dots,A_{n+m})$, given by
\[
    A_j =
    \begin{cases}
        \{g_x, c_x\},                        & \mbox{if } j = x,\\
        \{f_1, \dots, f_m\},                 & \mbox{if } j = f,\\
        \{g_j\} \cup \{c_i : u_i \in F(S_j)\},  & \mbox{if } j \in \{1, \dots, k\},\\
        \emptyset,                           & \mbox{otherwise.}
    \end{cases}
\]
Observe that since $\mathcal{K}$ is the cover of $\mathcal{U}$,
every element $u_i$ belongs to exactly one set from $F(S_1),\dots,F(S_k)$.
Thus, $A$ is indeed a well-defined allocation.
Finally, observe that from Claims~\ref{claim:mms+rm:mms} and \ref{claim:mms+rm:rm}
allocation $A$ is both \MMS{} and RM.

It remains to show that if
there exists an \MMS{} and RM allocation $A$ in
the lexicographic mixed instance,
then there exists a set cover in the original instance.
Let us fix such an allocation $A=(A_x,A_f,A_1,\dots,A_{n+m})$.
From Claim~\ref{claim:mms+rm:rm}a we know that $c_x \in A_x$.
By Claim~\ref{claim:mms+rm:mms} this implies
that $A_x$ contains a good
which is more important than chore $c_x$ for the setter agent $x$.
However, the only such good is $g_x$.
Hence, $g_x \in A_x$.
Next, let us denote the agents
which received any of the goods $\{g_1,\dots,g_k\}$
by $\mathcal{H}$, i.e.,
let $\mathcal{H} = \{ j \in N : \{g_1,\dots,g_k\} \cap A_j \neq \emptyset\}$.
From Claim~\ref{claim:mms+rm:rm}d we know that $\mathcal{H} \subseteq [n+m]$.

Now, let us prove
that only agents from $\mathcal{H}$
can receive chores $c_1,\dots,c_m$ in allocation $A$.
Assume otherwise, i.e., there exists agent $j$ and chore $c_i$ such that
$j \not \in \mathcal{H}$ and $c_i \in A_j$.
From Claim~\ref{claim:mms+rm:mms} this means that $A_j$ contains also some good
that is more important for $j$ than chore $c_i$.
However, this good cannot be $g_x$, because as we established $g_x \in A_x$.
Moreover, by Claim~\ref{claim:mms+rm:rm}c, this cannot be one of the goods $f_1,\dots,f_m$.
The only remaining goods are $g_1,\dots,g_k$---a contradiction.
Thus, indeed, all chores $c_1,\dots,c_m$ are distributed among agents from $\mathcal{H}$.
By Claim~\ref{claim:mms+rm:rm}e, this means that
chores $c_1,\dots,c_m$ are given to agents from $\mathcal{H}' = \mathcal{H} \cap \{1,\dots,n\}$.
Since there are $k$ goods $g_1,\dots,g_k$, we know that $|\mathcal{H}'| \le k$.
Therefore, let us take $\mathcal{K} = \{S_j \in \mathcal{S} : j \in \mathcal{H}'\}$.
Clearly, $|\mathcal{K}| \le k$.
Thus, it remains to check that the union of subsets in $\mathcal{K}$
contains all elements of $\mathcal{U}$.
Fix arbitrary $u_i \in \mathcal{U}$.
By Claim~\ref{claim:mms+rm:rm}e, $c_i \in A_j$ for some agent $j$ such that $u_i \in S_j$.
By Claim~\ref{claim:mms+rm:mms}, agent $j$ has to receive some good 
and since it cannot be $g_x$ nor any of $f_1,\dots,f_m$
it has to be one from $g_1,\dots,g_k$.
Hence $j \in \mathcal{H}'$ and $u_i \in \bigcup_{S_j \in \mathcal{K}} S_j$,
which concludes the proof.
\end{proof}

\end{document}